\newtheorem{definition}{Definition}
\newtheorem{theorem}{Theorem}
\newtheorem{lemma}{Lemma}
\newtheorem{proof}{Proof}
\newcommand{\aries}{{\tt A\small{RIES}}}
\newcommand{\nexists}{\not\exists}
\begin{document}

\setlength{\pdfpageheight}{\paperheight}
\setlength{\pdfpagewidth}{\paperwidth}

\conferenceinfo{Submission to SoCC '15}{August, 2015, Hawaii, USA}
\copyrightyear{2015}
\copyrightdata{978-1-nnnn-nnnn-n/yy/mm}
\doi{nnnnnnn.nnnnnnn}





\title{Adaptive Logging for Distributed In-memory Databases}
\authorinfo{Chang Yao$^\ddag$, Divyakant Agrawal$^\sharp$, Gang Chen$^\S$, Beng Chin Ooi$^\ddag$, Sai Wu$^\S$}
           {$^\ddag$National University of Singapore, $^\sharp$University of California at Santa Barbara, $^\S$Zhejiang University}
           {$^\ddag$\{yaochang,ooibc\}@comp.nus.edu.sg,
           $^\sharp$agrawal@cs.ucsb.edu,
           $^\S$\{cg,wusai\}@cs.zju.edu.cn}

\maketitle
\begin{abstract}
A new type of logs, the
command log,
is being employed to replace the traditional data log (e.g., \aries\ log) in the in-memory databases.
Instead of recording how the tuples are updated,
a command log only tracks the
transactions being executed,  thereby
effectively reducing the size of the log and improving the performance.
Command logging on the other hand increases the cost of recovery,
because all the transactions in the log after the last checkpoint must be completely redone in case of a failure.
In this paper,
we first extend the command logging technique to a distributed environment,
where all the nodes can perform recovery in parallel.
We then propose an adaptive logging approach by combining data logging and
command logging.
The percentage of data logging versus command logging becomes
an optimization between the performance of transaction processing and recovery to suit
different OLTP applications.
Our experimental study compares the performance
of our proposed adaptive logging,
\aries-style data logging and command logging on top of H-Store.
The results show that adaptive logging can achieve a 10x boost for recovery and a
transaction throughput that is comparable to that of command logging.
\end{abstract}




\section{Introduction}


Harizopoulos et al.~\cite{Harizopoulos:2008:OTL:1376616.1376713} show that
in in-memory databases,
substantial amount of time is spent in logging,
latching, locking, index maintenance,
and buffer management.
The existing techniques in relational databases will lead
to suboptimal performance for in-memory databases,
 because
the assumption of I/O being the main bottleneck is no longer valid.
For instance,
in conventional databases,
the most widely used
logging approach is the write-ahead log (e.g., \aries\ log \cite{DBLP:journals/tods/MohanHLPS92}).
Write-ahead logging records the history of transactional updates to the data tuples,
and we shall refer to it as the \emph{data log} in this paper.

\begin{figure}
\centering
\includegraphics[width=0.45\textwidth,height=0.22\textwidth]{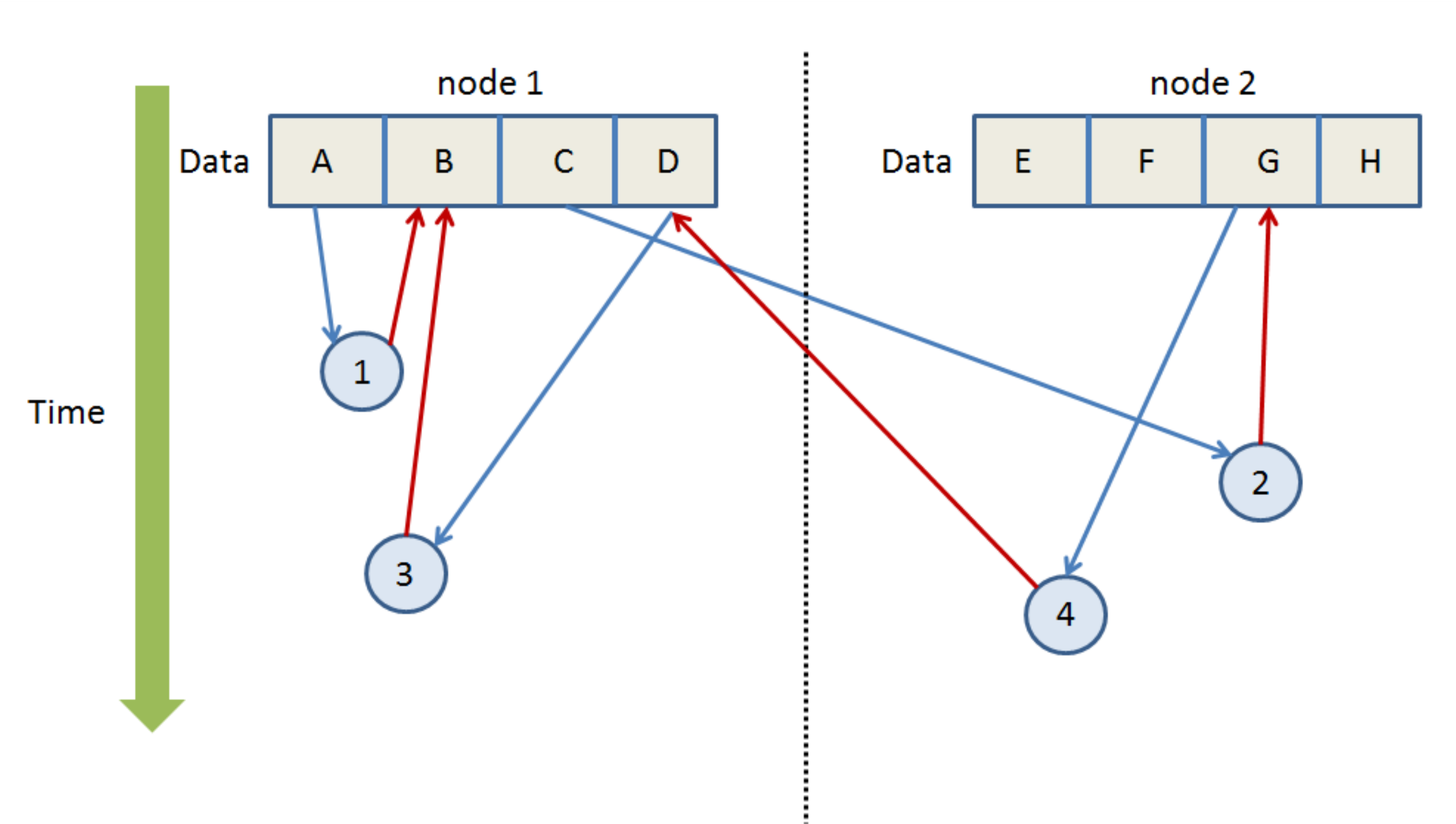}
\caption{Example of logging techniques}\label{fig:example}
\vspace {-10pt}
\end{figure}

Consider an example shown in Figure~\ref{fig:example}.
There are two nodes processing four concurrent transactions,
$T_1$ to
$T_4$.
All the transactions follow the same format:
$$f(x, y): y = 2x$$
So $T_1=f(A, B)$, indicating that $T_1$ reads the value of $A$ and then updates $B$ as $2A$.
Since different transactions may modify the same value,
there should be a locking mechanism.
Based on their timestamps,
the correct serialized order of the transactions is $T_1$, $T_2$, $T_3$ and $T_4$.
Let $v(X)$ denote the value of
parameter $X$.
The \aries\ data log of the four transactions are listed as below:

\begin{table}[h]
\vspace {-5pt}
\scriptsize
\caption{\aries\ log}\label{tb:aries}
\centering
\begin{tabular}{|l|l|l|l|l|}
\hline
\scriptsize timestamp & \scriptsize transaction ID & \scriptsize parameter & \scriptsize old value & \scriptsize new value \\\hline
100001 & $T_1$ & B & v(B) & 2v(A) \\\hline
100002 & $T_2$ & G & v(G) & 2v(C) \\\hline
100003 & $T_3$ & B & v(B) & 2v(D)\\\hline
100004 & $T_4$ & D & v(D) & 2v(G) \\\hline
\end{tabular}\\
\vspace {-5pt}
\end{table}

\aries\  log  records how the data are modified by the transactions,
and by using the log data, we can efficiently recover if there is a node failure.
However, the recovery process of in-memory databases is sightly different from that of
conventional disk-based databases.
To recover,
an in-memory database first loads the database snapshot recorded in the last
checkpoint and then replays all the committed transactions in \aries\ log.
{\color{black}
For uncommitted transactions,
no roll-backs are required, since uncommitted writes will 
not be reflected onto disk.
}

\aries\  log is a ``heavy-weight'' logging approach, as
it incurs high overheads.
In conventional database systems,
where I/Os for processing transactions
dominate the performance,
the logging cost is tolerable.
However, in an in-memory system,
since all the transactions are processed in memory,
logging cost becomes a dominant cost.

{\color{black}
To reduce the logging overhead,
a command logging approach \cite{DBLP:conf/icde/MalviyaWMS14} was
proposed to only record the transaction information with which
transaction can be fully replayed when facing a failure.
In H-Store \cite{DBLP:journals/pvldb/KallmanKNPRZJMSZHA08},
each command log records the ID of the corresponding transaction and which
stored procedure is applied to update the database along with input parameters.
As an example,
the command logging records for Figure \ref{fig:example}
are simplified as below:

\begin{table}[h]
\vspace {-5pt}
\scriptsize
\caption{Command log}\label{tb:command}
\centering
\begin{tabular}{|c|c|c|c|}
\hline
\scriptsize transaction ID &\scriptsize  timestamp & \scriptsize  procedure pointer & \scriptsize  parameters \\\hline
1 & 100001 & $p$ & A, B \\\hline
2 & 100002 & $p$ & C, G \\\hline
3 & 100003 & $p$ & D, B \\\hline
4 & 100004 & $p$ & G, D \\\hline
\end{tabular}\\
\vspace {-5pt}
\end{table}

As all four transactions follow the same routine,
we only keep a pointer $p$ to the
details of
{\color{black} storage procedure: $f(x, y): y =2x$}.
For recovery purposes,
we also need to maintain the parameters
for each transaction, so that the system can re-execute
all the transactions from the last checkpoint
when a failure happens.
Compared to \aries-style log,
a command log is much more compact and hence reduces the I/O
cost for materializing it onto disk.
It was shown that command logging can
significantly increase the throughput of transaction processing in in-memory
databases \cite{DBLP:conf/icde/MalviyaWMS14}.
However, the improvement is achieved at the expense of
its recovery performance.

When there is a node failure,
all the transactions have to be replayed in the command logging approach,
while \aries-style logging simply recovers the value of each column 
({\color{black}Note that throughout the paper,
we use "attribute" to refer to a column defined in the schema
and "attribute value" to denote the value of a tuple in a
specific column}).
For example, to fully redo $T_1$, command logging needs to read
the value of $A$ and update the value of $B$, while
if \aries\ logging is adopted,
we just set $B$'s value as $2v(A)$ as
recorded in the log file.
More importantly,
command logging does not support parallel recovery in a distributed system.
In the command logging \cite{DBLP:conf/icde/MalviyaWMS14},
command logs
of different nodes are merged at the master node during recovery, and
to guarantee the correctness of recovery,  transactions
must be reprocessed in a serialized order
 based on their timestamps.
For example, even in a network of two nodes,
the transactions have to be
replayed one by one due to their possible competition.
For the earlier example, $T_3$ and
$T_4$ cannot be concurrently processed by node 1 and 2 respectively,
because both transactions
need to lock the value of $D$. For comparison, \aries-style logging can
start the recovery in node 1 and 2 concurrently and independently.

In summary,
command logging reduces the I/O cost of processing transactions, but incurs a much
higher cost for recovery than \aries-style logging,
especially in a distributed environment.
To this end,
we propose a new logging scheme which achieves a comparable performance
as command logging for processing transactions,
while enabling a much more efficient recovery.
Our logging approach also allows the users to tune the parameters to achieve a preferable
tradeoff between transaction processing and recovery.

In this paper,
we first propose a distributed version of command logging.
In the recovery process,
before redoing the transactions,
we first generate the dependency graph by scanning the log data.
Transactions that read or write the same tuple will be linked together.
A transaction can be reprocessed only if all its dependent transactions have been processed and committed.
On the other hand,
transactions that do not have dependency relationship can be concurrently processed.
Based on this principle,
we organize transactions into different processing groups.
Transactions inside a group have dependency relationship, while transactions of
different groups can be processed concurrently.

While distributed version of command logging effectively exploits the parallelism
among the nodes to speed up recovery,
some processing groups can be rather large,
causing a few transactions to block the processing of many others.
We subsequently propose an adaptive logging approach
which adaptively makes use of the command logging and \aries-style logging.
More specifically,
we identify the bottlenecks dynamically
based on our cost model and resolve them using \aries\ logging.
We materialize the transactions identified as bottlenecks in \aries\ log.
So transactions depending on them can be recovered more efficiently.

It is indeed very challenging to classify transactions
into the ones that may cause bottleneck and those that will not,
because we have to make a real-time decision on either adopting command logging or \aries\ logging.
During transaction processing,
we do not know the impending distribution of transactions.
Even if the dependency graph of impending transactions is known before the processing starts,
we note that the optimization problem of log creation is still
an NP-hard problem.
Hence, a heuristic approach is subsequently proposed to find an approximate
solution based on our model.
The idea is to estimate the importance of each transaction
based on the access patterns of existing transactions.

Finally,
we implement our two approaches,
namely distributed command logging and adaptive logging,
on top of H-Store \cite{DBLP:journals/pvldb/KallmanKNPRZJMSZHA08} and compare
them with
\aries\ logging and command logging.
Our results show that adaptive logging can achieve a
comparable performance for transaction processing as command logging,
while it performs 10 times faster than command logging for recovery in a distributed system.

The rest of the paper is organized as follows.
We present our distributed command logging approach in Section 2
and the new adaptive logging approach in Section 3.
The experimental results are presented in
Section 4 and we review some related work in Section 5.
The paper is concluded in Section 6.


\section{Distributed Command Logging}

As described earlier,
the command logging \cite{DBLP:conf/icde/MalviyaWMS14}
only records
{\color{black} the transaction ID, storage procedure and
its input parameters}.
If some servers fail,
the database can restore the last snapshot
and redo all the transactions
in the command log to re-establish the database state.
Command logging operates
at a much coarser granularity and writes much fewer bytes per
transaction than \aries-style logging.

However,
the major concern of command logging is its recovery performance.
In VoltDB\footnote{http://voltdb.com/},
command logs of different nodes
are shuffled to the master node which merges them using the timestamp order.
Since command logging does not record how the data are manipulated,
we must redo all transactions one by one,
incurring high recovery overhead.
{\color{black}
An alternative solution is to maintain multiple replicas \cite{baker2011megastore,corbett2013spanner,rao2011using,thomson2012calvin,haughian2014benchmarking},
so that data on the failed node can be recovered
from their replicas.
However, the drawback of such approach is twofold. 
First,
keeping consistency between replicas incurs high synchronization overhead,
further slowing down the transaction processing. Second, given a limited amount of memory, it
is too expensive to maintain replicas in memory.
Therefore, in this paper, we focus on the log-based approaches,
although we also show the performance of a replication-based technique
in our experimental study.
}


Before delving into the details of our  approach,
we first define the correctness of recovery in our system.
Suppose the data are partitioned
to $N$ cluster nodes.
Let $\mathcal{T}$ be the set of transactions since the last checkpoint.
For a transaction $t_i\in\mathcal{T}$, if $t_i$ reads or writes a tuple on
node $n_x\in N$, $n_x$ becomes a participant of $t_i$. Specifically, we
use $f(t_i)$ to return all those nodes involved in $t_i$ and we
use $f^{-1}(n_x)$ to represent all the
transactions in which $n_x$ has participated.
In a distributed system,
we will assign each transaction a coordinator,
typically
the node that minimizes the data transfer for processing the transaction.
The coordinator schedules the data accesses and monitors how its transaction is
processed.
Hence, we only need to create a command log entry in the coordinator \cite{DBLP:conf/icde/MalviyaWMS14}.
We use
$\theta(t_i)$ to denote $t_i$'s coordinator. Obviously, we have $\theta(t_i) \in f(t_i)$.

Given two transactions
$t_i\in \mathcal{T}$ and $t_j\in \mathcal{T}$, we define an order
function $\prec$ as: $t_i \prec t_j$, only if $t_i$ is committed before $t_j$.
When a node $n_x$ fails, we need to redo a set of transactions $f^{-1}(n_x)$.
But these transactions may compete for the same tuple with other transactions.
Let $s(t_i)$ and $c(t_i)$ denote the submission time and commit time respectively.

{\color{black}
\begin{definition}{\bf Transaction Competition}\label{def:compete}\\
Transaction $t_i$ competes with transaction $t_j$, if
\begin{enumerate}
  \item $s(t_j)< s(t_i)<c(t_j)$.
  \item $t_i$ and $t_j$ read or write the same tuple.
\end{enumerate}
\end{definition}
}

Note that we define the competition as a unidirectional relationship.
$t_i$ competes with $t_j$,
and $t_j$ may compete with others which
may modify the same set of tuples and
commit before it.
Let $\odot(t_i)$ be the set of all the transactions that $t_i$ competes with.
We define function $g$
for transaction set $\mathcal{T}_j$ as:
$$g(\mathcal{T}_j)=\bigcup_{\forall t_i\in \mathcal{T}_j}\odot(t_i)$$
To recover the database from $n_x$'s failure,
we create an initial
recovery set $\mathcal{T}^x_0 = f^{-1}(n_x)$ and set $\mathcal{T}^x_{i+1} = \mathcal{T}^x_i \cup g(\mathcal{T}^x_i)$.
As we have a limited number of transactions,
we can find a $L$ satisfying
when $j\geq L$, we have $\mathcal{T}^x_{j+1}=\mathcal{T}^x_{j}$. This is because
there are no more transactions accessing the same set of tuples since the last checkpoint. We
call $\mathcal{T}^x_L$ the complete recovery set for $n_x$.

Finally, we define the correctness of recovery in the distributed system as:
\begin{definition}{\bf Correctness of Recovery}\\
When node $n_x$ fails, we need to redo all the transactions in its complete
recovery set by strictly following their commit order, e.g., if $t_i \prec t_j$,
then
$t_i$ must be reprocessed before $t_j$.
\end{definition}

To recover from a node's failure,
we need to retrieve its complete recovery set.
For this purpose, we build a dependency graph.

\subsection{Dependency Graph}

Dependency graph is defined as an acyclic direct graph $G=(V, E)$,
where each vertex
$v_i$ in $V$ represents a transaction $t_i$,
containing the information about its
timestamp ($c(t_i)$ and $s(t_i)$) and coordinator $\theta(t_i)$.
$v_i$ has an edge $e_{ij}$ to $v_j$, iff
\begin{enumerate}
  \item $t_i \ in \odot(t_j)$
  \item $\forall t_m\in \odot(t_j), c(t_m)<c(t_i)$
\end{enumerate}

\begin{figure}
\centering
\includegraphics[scale=0.4]{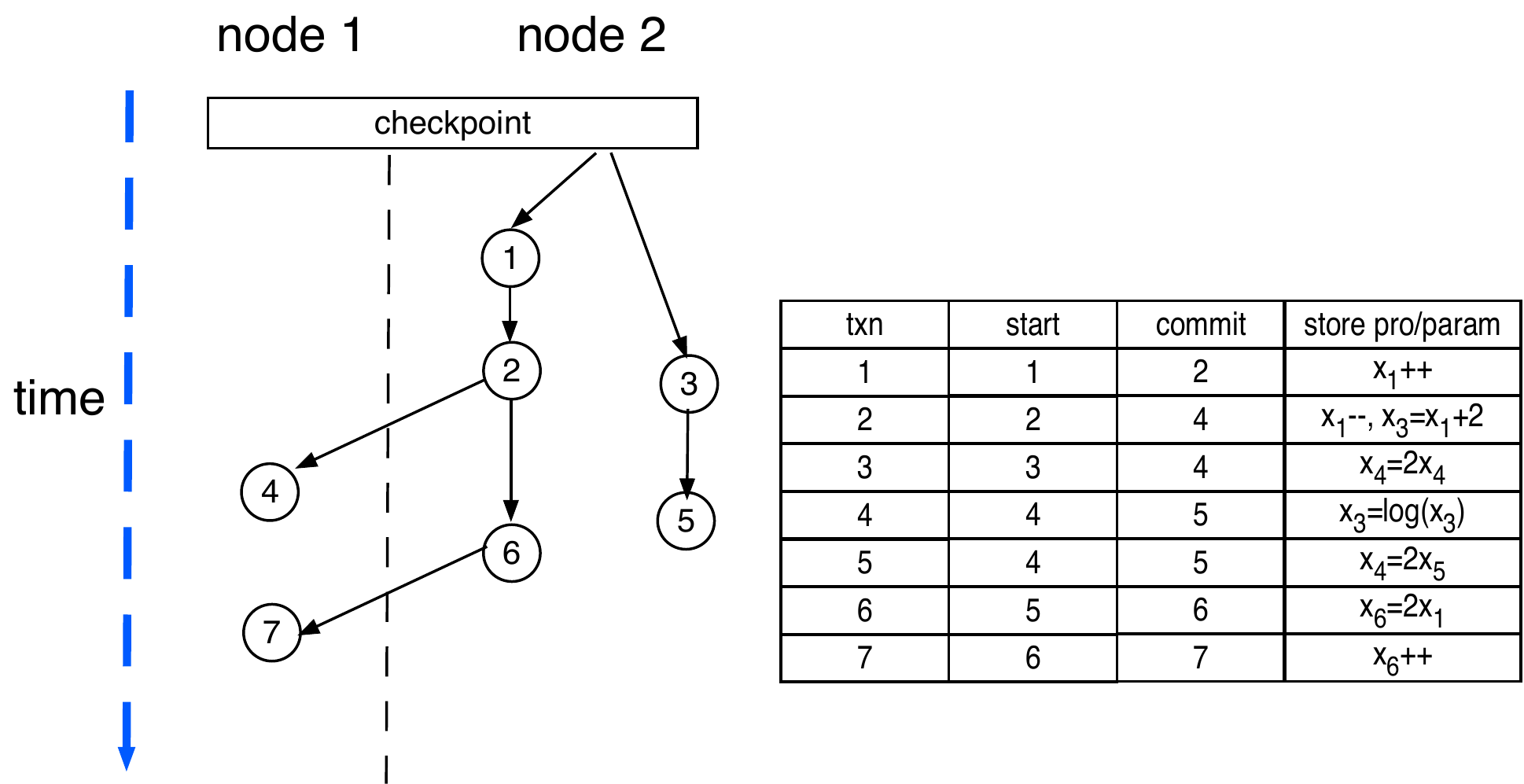}
\caption{\color{black}A running example}\label{fig:example2}
\vspace {-10pt}
\end{figure}

For a specific order of transaction processing,
there is one unique dependency graph
as shown in the following theorem.
\begin{theorem}
Given a transaction set $\mathcal{T}=\{t_0, ..., t_k\}$, where $c(t_i)<c(t_{i+1})$,
we
can generate a unique dependency graph for $\mathcal{T}$.
\end{theorem}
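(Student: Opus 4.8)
The plan is to argue uniqueness by showing that every ingredient in the definition of $G=(V,E)$ is a deterministic function of the fixed data attached to $\mathcal{T}$ — for each $t_i$ its submission time $s(t_i)$, its commit time $c(t_i)$, its coordinator $\theta(t_i)$, and the set of tuples it reads or writes. The vertex set is immediate: $V$ is in bijection with $\mathcal{T}=\{t_0,\dots,t_k\}$ and carries only this fixed per-transaction data, so $V$ is forced. The real work is to show that the edge set $E$ admits no freedom.

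First I would observe that the competition sets are pinned down with no choice: by Definition~\ref{def:compete}, $t_i\in\odot(t_j)$ holds exactly when $s(t_i)<s(t_j)<c(t_i)$ and $t_i,t_j$ touch a common tuple, a predicate depending only on the fixed timestamps and access sets. Hence each $\odot(t_j)$ is uniquely determined. I would then examine the two edge conditions. Condition~(1) merely requires $t_i\in\odot(t_j)$; condition~(2) requires $c(t_m)<c(t_i)$ for every $t_m\in\odot(t_j)$, i.e.\ that $t_i$ be the competitor of $t_j$ with the largest commit time. Because the indexing supplies strict inequalities $c(t_i)<c(t_{i+1})$, all commit times are distinct, so whenever $\odot(t_j)\neq\emptyset$ there is exactly one such maximizer and whenever $\odot(t_j)=\emptyset$ there is none. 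Consequently each vertex $v_j$ has in-degree at most one, and its unique predecessor (if any) is completely determined. Therefore $E$ is forced and $G$ is unique.

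It remains to confirm that the object produced is genuinely a dependency graph, i.e.\ acyclic, since the definition stipulates this. Here I would exploit the submission-time ordering implicit in competition: an edge $v_i\to v_j$ requires $t_i\in\odot(t_j)$, which by Definition~\ref{def:compete} forces $s(t_i)<s(t_j)$. Thus submission times strictly increase along every edge, hence along every directed path; a cycle $v_{i_1}\to\cdots\to v_{i_1}$ would yield $s(t_{i_1})<\cdots<s(t_{i_1})$, a contradiction. So $G$ is acyclic, and the construction indeed yields a valid dependency graph.

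I expect the only real subtlety to lie in reading condition~(2). Taken literally it quantifies over all $t_m\in\odot(t_j)$ including $t_m=t_i$, which would be vacuously unsatisfiable; the intended meaning is that $t_i$ has strictly larger commit time than every \emph{other} competitor of $t_j$, and it is precisely the distinctness of commit times (from $c(t_i)<c(t_{i+1})$) that makes this maximizer unique and hence the whole graph canonical. An equivalent and perhaps cleaner presentation would be an induction on the commit order: processing $t_0,t_1,\dots,t_k$ in increasing commit time, the incoming edge of each $t_j$ is fixed the moment $t_j$ is reached, since all of its competitors commit earlier and are therefore already placed.
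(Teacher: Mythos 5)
Your proof is correct and takes essentially the same route as the paper's: the vertex set is forced by $\mathcal{T}$, and each $t_j$ has at most one incoming edge because the edge conditions single out the competitor of $t_j$ with the latest commit time, which is unique since the hypothesis $c(t_i)<c(t_{i+1})$ makes all commit times distinct. You go beyond the paper in two minor ways---verifying acyclicity via strictly increasing submission times along edges, and flagging the degenerate literal reading of condition~(2) where $t_m=t_i$---both of which tighten the argument without changing its substance.
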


\begin{proof}
Since the vertices represent transactions,
we always have the same set of
vertices for the same set of transactions. We only need to prove that the edges
are also unique. Based on the definition,
 edge $e_{ij}$ exists, only if $t_j$ accesses the same set
of tuples that $t_i$ updates and no other transactions that commit
after $t_i$ have that property.
As for each transaction $t_j$, there is only one such
transaction $t_i$. Therefore, edge $e_{ij}$
is a unique edge between $t_i$ and $t_j$.
\end{proof}

{\color{black}
We use Figure \ref{fig:example2} as a running example
to illustrate the idea.
In Figure \ref{fig:example2},
there are totally seven
transactions since the last checkpoint,
aligned based on their coordinators: node 1 and node 2.
We show transaction IDs, timestamps, storage procedures and parameters in
the table.
Based on the definition, transaction $t_2$ competes with transaction $t_1$,
as both of them update $x_1$.
Transaction $t_4$ competes with transaction $t_2$ on $x_3$.
The complete recovery set for $t_4$ is $\{t_1, t_2, t_4\}$.
Note that
although $t_4$ does not access the attribute that $t_1$ updates,
$t_1$ is still in
$t_4$'s recovery set because of the recursive dependency between $t_1$, $t_2$ and $t_4$.
After constructing the dependency graph and generating the recovery set,
 we can adaptively
recover the failed node.
For example, to recover node 1, we do not have to reprocess
transactions $t_3$ and $t_5$.
}

\subsection{Processing Group}

In order to generate the complete recovery set efficiently,
we organize transactions as
processing groups.
Algorithm \ref{algo:group} and \ref{algo:addgroup} illustrate
how we generate the groups from a dependency graph.
In  Algorithm \ref{algo:group},
 we start from the root vertex that
represents the checkpoint to iterate all the vertices in the graph.
The neighbors
of the root vertex are
transactions that do not compete with the others.
We create one processing group for each of them (line 3-7).
$AddGroup$ is
a recursive function that explores all reachable vertices and adds them
into the group.
One transaction can exist in multiple groups if more than
one transaction competes with it.

\begin{algorithm}
\footnotesize \caption{\label{algo:group}{\tt CreateGroup(DependencyGraph $G$)}}
\begin{algorithmic}[1]
\STATE Set $S = \emptyset$
\STATE Vertex $v$ = $G$.getRoot()
\WHILE{$v$.hasMoreEdge()}
\STATE Vertex $v_0$ = $v$.getEdge().endVertex()
\STATE Group $g$ = new Group()
\STATE AddGroup($g$, $v_0$)
\STATE $S$.add($g$)
\ENDWHILE
\STATE return $S$
\end{algorithmic}
\end{algorithm}
\vspace {-10pt}

\begin{algorithm}
\footnotesize \caption{\label{algo:addgroup}{\tt AddGroup(Group $g$, Vertex $v$)}}
\begin{algorithmic}[1]
\STATE $g$.add($v$)
\WHILE{$v$.hasMoreEdge()}
\STATE Vertex $v_0$ = $v$.getEdge().endVertex()
\STATE AddGroup($g$, $v_0$)
\ENDWHILE
\end{algorithmic}
\end{algorithm}


\subsection{Algorithm for Distributed Command Logging}

When we detect that node $n_x$ fails,
we stop the transaction processing and start
the recovery process.
One new node starts up
 to reprocess all the transactions in $n_x$'s
complete recovery set.
Because some transactions are distributed transactions involving
other nodes,
the recovery algorithm runs as a distributed process.

Algorithm \ref{algo:cmdlog} shows the basic idea of recovery.
First, we retrieve all the
transactions that do not compete with the others
since the last checkpoint (line 3).
These transactions can be processed in parallel.
Therefore, we find their coordinators
and forward them correspondingly for processing.
At each coordinator, we invoke
Algorithm \ref{algo:prev} to process a specific transaction $t$.
We first wait until
all the transactions in $\odot(t)$ are processed.
Then, if $t$ has not been processed
yet,
we will process it and  retrieve all its neighbor transactions following the links
in the dependency graph.
If those transactions are also in the recovery set,
 we recursively
invoke function {\em ParallelRecovery} to process them.

\begin{algorithm}
\footnotesize \caption{\label{algo:cmdlog}{\tt Recover(Node $n_x$, DependencyGraph $G$)}}
\begin{algorithmic}[1]
\STATE Set $S_T$ = getAllTransactions($n_x$)
\STATE CompleteRecoverySet $S$=getRecoverySet($G$,$S_T$)
\STATE Set $S_R$ = getRootTransactions($S$)
\FOR{Transaction $t \in S_R$}
\STATE Node $n$ = $t$.getCoordinator()
\STATE ParallelRecovery($n$, $S_T$, $t$)
\ENDFOR
\end{algorithmic}
\end{algorithm}

\begin{algorithm}
\footnotesize \caption{\label{algo:prev}{\tt ParallelRecovery(Node $n$, Set $S_T$, Transaction $t$)}}
\begin{algorithmic}[1]
\WHILE{wait($\odot(t)$)}
\STATE sleep(timethreshold)
\ENDWHILE
\IF{$t$ has not been processed}
\STATE process($t$)
\STATE Set $S_t$ = $g$.getDescendant($t$)
\FOR{$\forall t_i \in S_t\cap S_T$}
\STATE Node $n_i$ = $t$.getCoordinator()
\STATE ParallelRecovery($n_i$, $S_T$, $t_i$)
\ENDFOR
\ENDIF
\end{algorithmic}
\end{algorithm}

\begin{theorem}
Algorithm \ref{algo:cmdlog} guarantees the correctness of the recovery.
\end{theorem}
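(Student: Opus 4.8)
The plan is to check the two clauses of Definition~2 in turn: first that the algorithm reprocesses \emph{every} transaction of the complete recovery set $\mathcal{T}^x_L$, and second that it respects the commit order $\prec$ among transactions that genuinely depend on one another. I would begin by recording two structural facts about $G$. By the edge rule, every vertex $t_j$ has at most one incoming edge, namely from the unique latest-committing member of $\odot(t_j)$, and every edge runs from an earlier- to a later-submitted transaction (since $t_i\in\odot(t_j)$ forces $s(t_i)<s(t_j)$); hence $G$ is an acyclic forest whose roots are exactly the transactions $t$ with $\odot(t)=\emptyset$. Moreover, because $\mathcal{T}^x_L$ is a fixed point of $\mathcal{T}\mapsto\mathcal{T}\cup g(\mathcal{T})$, the recovery set is closed under competition: $t\in\mathcal{T}^x_L$ implies $\odot(t)\subseteq\mathcal{T}^x_L$.

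For completeness I would show, by induction on the submission time $s(t)$, that every $t\in\mathcal{T}^x_L$ is reachable from a root in $S_R$ (line~3) along a forward path lying entirely inside $S=\mathcal{T}^x_L$. If $\odot(t)=\emptyset$ then $t$ is itself such a root; otherwise the latest-committing $t_i\in\odot(t)$ satisfies $t_i\in\mathcal{T}^x_L$ by closure and $s(t_i)<s(t)$, and carries the edge from $t_i$ to $t$, so the induction hypothesis for $t_i$ extends a root-to-$t_i$ path to a root-to-$t$ path. Because \emph{ParallelRecovery} recurses on exactly those descendants of a processed vertex that lie in the recovery set, this path is eventually traversed and $t$ is processed, which establishes the first clause.

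For the ordering, the decisive mechanism is the guard \texttt{wait}$(\odot(t))$ at line~1 of \emph{ParallelRecovery}, which blocks $t$ until every transaction it competes with has already been redone. Under the serializable execution enforced by the system a conflicting pair cannot interleave, so $t'\in\odot(t)$ forces $c(t')<c(t)$, i.e.\ $t'\prec t$; the guard therefore replays competing transactions in commit order, whereas non-competing transactions access disjoint tuples and may run concurrently without altering the final state. The step I expect to be the main obstacle is the liveness side of this guard---showing that the recursion never deadlocks and that every waited-for transaction is actually triggered. I would settle both points together by observing that each $t'\in\odot(t)$ lies in $\mathcal{T}^x_L$ by closure and is hence triggered by the completeness argument, and that the ``waits-for'' relation strictly decreases $s(\cdot)$, so it is acyclic and every guard eventually clears. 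Completeness together with this order guarantee is exactly what Definition~2 demands.
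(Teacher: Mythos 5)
Your proof takes essentially the same route as the paper's own: both decompose correctness into (i) \emph{completeness} --- every transaction in the complete recovery set gets redone --- and (ii) \emph{ordering} --- dependent transactions are redone in commit order --- and both derive these from the structure of the dependency graph. The paper's proof, however, is only three sentences: it asserts that transactions in the same processing group are processed in commit order ``as we follow the links of dependency graph,'' and that the complete recovery set is contained in the union of the processing groups a transaction joins. You prove the same two facts properly: you make explicit that $G$ is a forest (at most one incoming edge per vertex, edges increasing in submission time), you prove completeness by induction on $s(\cdot)$ using the closure of $\mathcal{T}^x_L$ under $g$, and you add a liveness argument --- the waits-for relation strictly decreases submission time, so the \texttt{wait}$(\odot(t))$ guards cannot deadlock --- which the paper omits entirely even though the theorem is vacuous without it. Your reading of the recursion as confined to the recovery set also matches the paper's prose (the pseudocode's use of $S_T$ rather than $S$ is a typo you silently and correctly repair).

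One step of your ordering argument is nevertheless false as stated: ``non-competing transactions access disjoint tuples.'' Definition~\ref{def:compete} makes competition require \emph{temporal overlap} ($s(t_j)<s(t_i)<c(t_j)$) in addition to sharing a tuple, so two transactions that touch the same tuple without overlapping in time --- the second submitted after the first commits --- are non-competing. Such a pair has no edge in $G$ and is constrained by no guard, yet its replay order clearly matters (write--read or write--write on the same tuple), so commutativity of non-competing transactions cannot be derived from Definition~\ref{def:compete}. This hole is inherited from the paper rather than introduced by you --- the published proof never confronts pairs that conflict without competing, and its own claim that group membership plus ``following the links'' yields commit order is equally unsupported for vertices on different branches --- but your write-up converts the paper's silence into an explicit incorrect claim. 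A complete argument would need either an added assumption that every conflicting pair since the checkpoint overlaps in time, or a switch from the competition relation to the time-dependence relation of Definition~3, which is what actually captures the conflicts that recovery must respect.
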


\begin{proof}
In Algorithm \ref{algo:cmdlog},
if two transactions $t_i$ and $t_j$
are in the same processing group and $c(t_i)<c(t_j)$, $t_i$ must
be processed before $t_j$, as we follow the links of dependency
graph.
The complete recovery set of $t_j$ is the subset of the union of all the
processing
groups that $t_j$ joins.
Therefore, we will redo all the transactions in the recovery
set for a specific transaction as in Algorithm \ref{algo:cmdlog}.
\end{proof}

As an example, suppose node 1 fails in Figure \ref{fig:example2}. The
recovery set is $\{t_1, t_2, t_4, t_6, t_7\}$.
We will first redo $t_1$ in node 2
which is the only transaction that can run
without waiting for the other transactions.
Note that although node 2 does not fail,
we still need to reprocess $t_1$, because it modifies the tuples that are accessed
by those failed transactions.
After $t_1$ and $t_2$ commit,
we will ask the new node
which replaces node 1 to reprocess $t_4$.
 Simultaneously, node 2 will process $t_6$
in order to recover $t_7$.

\subsection{Footprint of Transactions}

To reduce the overhead of transaction processing,
a dependency graph is built offline.
Before a recovery process starts,
we scan the log to build the dependency graph.
For this purpose,
we introduce a light weight
footprint for transactions.
Footprint is a specific type of write ahead log.
Once a
transaction is committed, we record the transaction ID and the involved tuple ID as
its footprint.
Figure \ref{fig:vs} illustrates the structures of footprint and \aries\ log.
\aries\ log
maintains detailed information about a transaction,
including  partition ID,
table name, modified column, original value and updated value,
based on which we
can successfully redo a transaction.
On the contrary, footprint only records IDs of those
tuples that are read or updated by a transaction.
It incurs much less storage overhead
than \aries\ log (on average,
each record in \aries\ log and footprint requires 3KB and 450B respectively)
and hence,
does not significantly affect the performance of transaction processing.
The objective of recording footprints
is not to recover lost transactions,
but to build the dependency graph.

\begin{figure}
\centering
\includegraphics[scale=0.65]{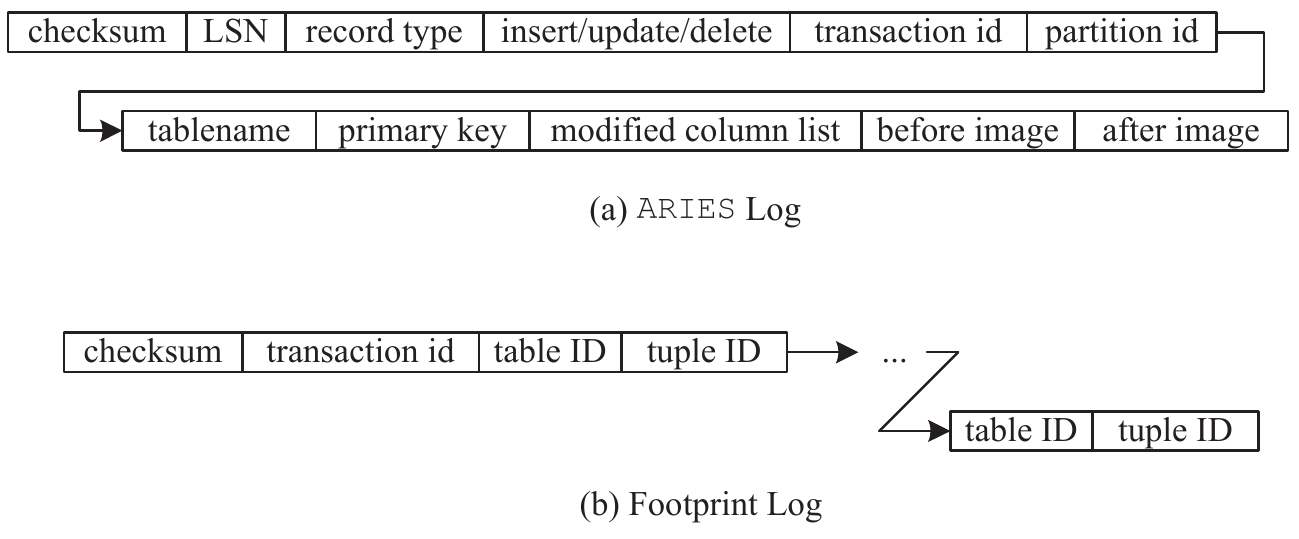}
\caption{\aries\ Log VS Footprint Log}\label{fig:vs}
\vspace {-15pt}
\end{figure}

\section{Adaptive Logging}

The bottleneck of our
distributed command logging is caused by dependencies among transactions.
To ensure causal consistency \cite{Bailis:2013:BCC:2463676.2465279},
transaction $t$ is blocked until
all the transactions in $\odot(t)$ have been processed.
If we fully or partially resolve dependencies among transactions,
the overhead of recovery can be effectively reduced.

\subsection{Basic Idea}
\label{sec:basic_idea}
As noted in the introduction,
\aries\ log allows each node to recover independently.
If node $n_x$ fails,
we just load its log data since the last checkpoint
to redo all updates.
We do not need to consider
the dependencies among transactions,
because the log completely records how a transaction
modifies the data.
Hence, the intuition of our adaptive logging approach is to combine
command logging
and \aries\ logging.
For transactions highly dependent on the others,
we create \aries\ log for these transactions to speed up their reprocessing.
For other transactions,
we apply command logging to reduce logging overhead.

{\color{black}
For example, in Figure \ref{fig:example2},
if we create \aries\ log for $t_7$,
we do not need to reprocess $t_6$ to recover node 1.
Moreover,
if \aries\ log has been created for $t_2$,
we just need to redo $t_2$ and
then $t_4$, and
the recovery process does not need to start from $t_1$.
In this case, to recover
node 1, 
only three transactions need to be re-executed, namely $\{t_2, t_4, t_7\}$.
To determine whether a transaction depends on the results of other
transactions,
we need a new relationship other than the transaction
competition that
describes the causal consistency.
}

\begin{definition}{\bf Time-Dependent Transaction}\\
Transaction $t_j$ is $t_i$'s time-dependent transaction, if
1) $c(t_i)>c(t_j)$; 2) $t_j$ updates tuple an attribute $a_x$
of tuple $r$ which is accessed by $t_i$; and 3) there is no
other transaction with commit time between $c(t_i)$ and $c(t_j)$
which also updates $r.a_x$.
\end{definition}

Let $\otimes(t_i)$ denote all $t_i$'s time-dependent transactions.
For transactions in $\otimes(t_i)$,
we can recursively find their own time-dependent transactions,
denoted as $\otimes^2(t_i)=\otimes(\otimes(t_i))$.
This process continues until we find the minimal $x$ satisfying
$\otimes^x(t_i)=\otimes^{x+1}(t_i)$.
$\otimes^x(t_i)$ represents all transactions
that must run before $t_i$ to guarantee the causal consistency.
For a special case,
if transaction $t_i$ does not compete with the others,
it does not have time-dependent transactions (namely, $\otimes(t_i)=\emptyset$) either.
$\otimes^x(t_i)$ is a subset of the complete recovery set of $t_i$.
Instead of redoing all the transactions in the complete recovery set,
we only need to process those in $\otimes^x(t_i)$ to guarantee that
$t_i$ can be recovered correctly.

If we adaptively select some transactions in $\otimes^x(t_i)$ to
create \aries\ logs,
we can effectively reduce
the recovery overhead of $t_i$.
That is,
if we have created
\aries\ log for transaction $t_j$,
$\otimes(t_j)=\emptyset$ and $\otimes^x(t_j)=\emptyset$,
because $t_j$ now can recover by simply loading its \aries\ log (in other words,
it does not depend on the results of the other transactions).

More specifically,
let $A=\{a_0, a_1,...,a_m\}$
denote the attributes that $t_i$ needs to access.
These attributes may
come from different tuples.
We use $\otimes(t_i.a_x)$ to represent
the time-dependent transactions that have updated $a_x$.
Therefore, $\otimes(t_i)=\otimes(t_i.a_0)\cup...\cup \otimes(t_i.a_m)$.
To formalize how \aries\ log can reduce the recovery overhead, we introduce
the following lemmas.

\begin{lemma}
If we have created an \aries\
log for $t_j\in \otimes(t_i)$,
transactions $t_l$ in $\otimes^{x-1}(t_j)$ can be discarded
from $\otimes^x(t_i)$, if
$$\nexists t_m\in \otimes(t_i), t_m=t_l \vee t_l \in \otimes^{x-1}(t_m)$$
\end{lemma}

\begin{proof}
The lemma indicates that all the time-dependent transactions of
$t_j$ can be discarded,
if they are not time-dependent transactions
of the other transactions in $\otimes(t_i)$, which is obviously true.
\end{proof}

The above lemma can be further extended for a random transaction in $\otimes^x(t_i)$.
\begin{lemma}
Suppose we have created an \aries\ log
for transaction $t_j\in \otimes^x(t_i)$ which
updates attribute set $\bar{A}$.
Transaction
$t_l\in \otimes^x(t_j)$ can be discarded, if
$$\nexists a_x \in (A-\bar{A}), t_l\in\otimes^x(t_i.a_x) $$
\end{lemma}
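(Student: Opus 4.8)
The plan is to prove the lemma by pinning down exactly which dependency chains cause $t_l$ to appear in $\otimes^x(t_i)$, and then showing that the \aries\ log on $t_j$ severs every such chain except those the hypothesis already excludes. First I would lift the attribute-wise decomposition $\otimes(t_i)=\otimes(t_i.a_0)\cup\cdots\cup\otimes(t_i.a_m)$, stated in the text, to the full closure: $\otimes^x(t_i)=\bigcup_{a\in A}\otimes^x(t_i.a)$. This follows by induction on $x$, since $\otimes$ distributes over unions of transactions, so that $\otimes^{k+1}(t_i)=\otimes\bigl(\bigcup_{a\in A}\otimes^k(t_i.a)\bigr)=\bigcup_{a\in A}\otimes^{k+1}(t_i.a)$. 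The decomposition lets me attribute every membership of $t_l$ in $\otimes^x(t_i)$ to a chain that originates at a specific attribute $a\in A$ that $t_i$ accesses.

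Next I would invoke the property, stated just before Lemma 1, that once an \aries\ log exists for $t_j$ the transaction recovers solely from its log, so that $\otimes(t_j)=\emptyset$ and hence $\otimes^x(t_j)=\emptyset$ for recovery purposes: none of the transactions $t_j$ transitively depends on are needed on $t_j$'s behalf. The core argument is then a dichotomy on the chains from $t_i$ that reach $t_l$. A chain that passes through $t_j$ must reach it via an attribute of the update set $\bar A$ and, beyond $t_j$, lies entirely in $\otimes^x(t_j)$; since $t_j$'s result is now reconstructed from its log, this entire tail, including $t_l$, is redundant. A chain that never touches $t_j$ is, by the decomposition, witnessed by some attribute $a_x$ with $t_l\in\otimes^x(t_i.a_x)$. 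The hypothesis $\nexists a_x\in(A-\bar A),\ t_l\in\otimes^x(t_i.a_x)$ forbids such a witness from $A-\bar A$, so the only surviving chains are those routed through $t_j$, all of which are cut by the log; discarding $t_l$ is therefore safe. This is precisely the attribute-level analogue of Lemma 1, and I would derive the present statement as the transitive extension obtained by telescoping the Lemma 1 argument along the chain from $t_i$ down to $t_j$.

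The hard part will be closing the dichotomy cleanly for attributes in $A\cap\bar A$. The worry is a mixed chain that starts at an attribute $a\in A\cap\bar A$ yet bypasses $t_j$: this happens exactly when $t_j$'s write to $a$ is overwritten by a later transaction before $c(t_i)$, so that the chain for $a$ begins at that later updater rather than at $t_j$. I would resolve this with the minimality clause in the definition of a time-dependent transaction --- that no intervening transaction updates $r.a$ between the two commit times --- which fixes, for each attribute, the unique updater on which $t_i$'s recovery depends. Either $t_j$ is that updater, in which case the chain through $a$ genuinely passes through $t_j$ and is cut by the log, or $t_j$ is overwritten for $a$, in which case $a$ does not carry $t_i$'s dependence on $t_j$ and must be treated as lying in $A-\bar A$ and hence be covered by the hypothesis. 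Making this case split airtight by induction on the chain length, and stating the precise sense in which $\bar A$ records the attributes through which $t_i$ depends on $t_j$, is where essentially all the work lies; the remaining manipulation of set unions is routine.
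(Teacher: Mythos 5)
Your first two steps are sound and are essentially a fleshed-out version of what the paper asserts: the attribute-wise decomposition $\otimes^x(t_i)=\bigcup_{a\in A}\otimes^x(t_i.a)$ does follow by induction since $\otimes$ of a set is the union of $\otimes$ over its elements, and the paper does state that an \aries-logged transaction has $\otimes(t_j)=\otimes^x(t_j)=\emptyset$. The paper's own proof is a single sentence --- transactions in $\otimes^x(t_j)$ that matter only through attribute values in $\bar{A}$ are redundant because the log materializes $\bar{A}$ --- so your chain dichotomy (chains through $t_j$ are cut by the log; chains avoiding $t_j$ must be witnessed by some attribute of $A$) is a more explicit rendering of the same idea, not a different route.

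The gap is exactly at the case you flagged, and your repair does not close it. Consider an attribute $a\in A\cap\bar{A}$ whose last updater before $c(t_i)$ is not $t_j$ but some $t_m$ with $c(t_j)<c(t_m)<c(t_i)$. A chain $t_i.a \rightarrow t_m \rightarrow \cdots \rightarrow t_l$ then bypasses $t_j$ and is not severed by $t_j$'s log: $t_m$ lies in $\phi(t_j,t_i,t_j.\bar{A})$, which the recovery procedure re-executes from its command log, so $t_m$'s input values --- possibly written by $t_l$ --- must themselves be restored. Declaring that such an $a$ ``must be treated as lying in $A-\bar{A}$'' is not a proof step; it silently replaces the lemma's hypothesis, which quantifies over the literal set $A-\bar{A}$ with $\bar{A}$ defined as the attributes $t_j$ updates, by a stronger one (quantification over all attributes whose dependence chain bypasses $t_j$). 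The two are not equivalent: take $A=\{a\}$, let $t_l$ write $b$, let $t_j$ read $b$ and write $a$, let $t_m$ read $a,b$ and write $a$, and let $t_i$ read $a$. Then $A-\bar{A}=\emptyset$, so the stated hypothesis holds vacuously and the lemma permits discarding $t_l$; yet re-executing $t_m$ without $t_l$ reads a stale $b$ and hands $t_i$ a wrong value of $a$. So either the hypothesis must be strengthened as you implicitly did (yielding a different lemma), or the recovery set must also include the dependencies of the $\phi$-transactions. Note that the paper's one-line proof never confronts this case either --- it defers all post-$t_j$ updates of $\bar{A}$ to $\phi(t_j,t_i,t_j.\bar{A})$ without accounting for their dependencies --- so what you have isolated is a genuine hole in the lemma as stated, not merely in your own argument; but as a proof of the statement given, your attempt does not go through.
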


\begin{proof}
Because $t_j$ updates $\bar{A}$,
all the transactions in $\otimes^x(t_j)$
that only update attribute values in $\bar{A}$ can be discarded without
violating the correctness of casual consistency.
\end{proof}

The lemma shows that all $t_j$'s time-dependent transactions are not necessary in
the recovery process, if they are not time-dependent transactions of any attribute
in $(A-\bar{A})$.
To recover the values of attribute set $\bar{A}$ for $t_i$,
we can start from $t_j$'s \aries\ log to redo
$t_j$ and then all transactions which also update $\bar{A}$ and have
timestamps in the range of $(c(t_j), c(t_i))$.
To simplify the presentation,
we use $\phi(t_j, t_i, t_j.\bar{A})$ to denote these transactions.

Finally, we summarize our observations as the following theorem,
based on which
we design our adaptive logging and recovery algorithm.
\begin{theorem}
Suppose we have created \aries\ logs for transaction set $\mathcal{T}_a$. To recover
$t_i$, we need to redo all the transactions in
$$\bigcup_{\forall a_x\in (A-\bigcup_{\forall t_j\in \mathcal{T}_a}t_j.\bar{A})} \otimes^x(t_i.a_x)  \cup \bigcup_{\forall t_j\in \mathcal{T}_a} \phi(t_j, t_i, t_j.\bar{A})$$
\end{theorem}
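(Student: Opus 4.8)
The plan is to reduce the theorem to a per-attribute decomposition of the recovery requirement and then apply the two preceding lemmas attribute by attribute. The starting observation is that to correctly recover $t_i$ it suffices to reconstruct, for every attribute $a_x\in A$, the value that $t_i$ read exactly as it stood at $c(t_i)$. Since the time-dependent relation is defined per attribute and $\otimes(t_i)=\otimes(t_i.a_0)\cup\cdots\cup\otimes(t_i.a_m)$, the full dependency-respecting recovery set of $t_i$ splits as $\otimes^x(t_i)=\bigcup_{a_x\in A}\otimes^x(t_i.a_x)$. First I would establish this decomposition formally by induction on the closure level $x$, so that the recovery problem for $t_i$ becomes the union of $|A|$ independent attribute-reconstruction problems, each solved by redoing $\otimes^x(t_i.a_x)$ in commit order.

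Next I would partition $A$ according to which attributes are already covered by an \aries\ log. Writing $\bar{A}_{\mathcal{T}_a}=\bigcup_{t_j\in\mathcal{T}_a}t_j.\bar{A}$, I split $A=A_1\cup A_2$ with $A_2=A-\bar{A}_{\mathcal{T}_a}$ the uncovered attributes and $A_1$ the covered ones. For each $a_x\in A_2$ no transaction in $\mathcal{T}_a$ updates $a_x$, so neither Lemma~1 nor Lemma~2 permits discarding anything and the full closure $\otimes^x(t_i.a_x)$ must be redone; collecting these yields exactly the first union in the statement. For each $a_x\in A_1$ there is some $t_j\in\mathcal{T}_a$ with $a_x\in t_j.\bar{A}$, and here I would invoke Lemma~2: starting from $t_j$'s \aries\ log restores the post-$t_j$ value of every attribute in $\bar{A}$, so the sub-closure $\otimes^x(t_j)$ restricted to $\bar{A}$ becomes redundant and only the updaters of $\bar{A}$ committing in $(c(t_j),c(t_i))$ remain, which is precisely $\phi(t_j,t_i,t_j.\bar{A})$. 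Ranging over all $t_j\in\mathcal{T}_a$ gives the second union, and taking the union of the two families proves the claim.

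The correctness argument then has two directions. For soundness I would check that replaying the stated union in commit order reproduces every $a_x\in A$: for $a_x\in A_2$ this is immediate from the meaning of $\otimes^x(t_i.a_x)$, and for $a_x\in A_1$ it follows because $t_j$'s log plus the windowed replay in $\phi(t_j,t_i,t_j.\bar{A})$ drives $a_x$ to its value at $c(t_i)$. The harder completeness direction requires showing that nothing outside the union is needed, i.e. every transaction discarded via the $\phi$-shortcut satisfies the guard $\nexists\, a_x\in(A-\bar{A}),\ t_l\in\otimes^x(t_i.a_x)$ of Lemma~2 and is therefore not time-dependent for any uncovered attribute.

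The main obstacle I anticipate is precisely this completeness step, because of the interaction between overlapping update sets: when several transactions in $\mathcal{T}_a$ cover the same attribute, and when a single discarded transaction is time-dependent of $t_i$ through both a covered and an uncovered attribute. The $\phi$-shortcut is only safe for a covered attribute, so I must argue that the first union, taken over the complementary set $A-\bar{A}_{\mathcal{T}_a}$, independently retains every transaction still required for an uncovered attribute, so that the two unions neither drop a needed transaction nor create redundant obligations. A secondary point worth making explicit is that re-executing an intermediate transaction in $\phi$ consumes its own read-set; I would close this by observing that the per-attribute decomposition must be applied transitively, so any input attribute of a $\phi$-transaction falls back into one of the two unions under the same case analysis, and the whole construction terminates because $\otimes^x$ stabilizes at a fixpoint.
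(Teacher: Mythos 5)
Your proposal takes essentially the same approach as the paper's proof: a per-attribute decomposition in which the first union collects the full closures $\otimes^x(t_i.a_x)$ for attributes not covered by any \aries-logged transaction, and the second union handles covered attributes by replaying forward from each $t_j$'s \aries\ log through the window $\phi(t_j, t_i, t_j.\bar{A})$. Note that the paper's own proof consists of only two sentences asserting this reading of the two terms, so the obstacles you flag --- the completeness direction, overlapping coverage by multiple transactions in $\mathcal{T}_a$, and the recursive read-sets of the $\phi$-transactions --- are genuine subtleties that the paper itself leaves unaddressed.
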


\begin{proof}
The first term represents all the
transactions that are required to recover attribute values
in $(A-\bigcup_{\forall t_j\in \mathcal{T}_a}t_j.\bar{A})$.
The second
term denotes all those transactions that we need to do by recovering from \aries\ logs
and following the timestamp order.
\end{proof}

\subsection{Logging Strategy}

By combining \aries\ logging and command logging into
a hybrid logging approach,
we
can effectively reduce the recovery cost.
Given a limited I/O budget $B_{i/o}$,
our adaptive approach selects the transactions
for \aries\ logging to maximize the recovery performance.
This decision has to be made during transaction processing, where
we determine which type of logs to create for each transaction before it commits.
However, since we do not know the future distribution of transactions,
it is impossible to generate an optimal selection.
In fact, even we know all the future transactions,
the optimization problem is still NP-Hard.

Let $w^{aries}(t_j)$ and $w^{cmd}(t_j)$ denote the I/O costs of \aries\ logging and
command logging for transaction $t_j$ respectively.
We use $r^{aries}(t_j)$ and $r^{cmd}(t_j)$ to represent the recovery cost of
$t_j$ regarding to the \aries\ logging and
command logging respectively.
If we create an \aries\
log for transaction $t_j$
that is a time-dependent transaction of $t_i$,
the recovery cost is reduced by:
\begin{equation}\label{eq:benefit}
\small
  \Delta(t_j, t_i) = \sum_{\forall t_x\in \otimes^{x}(t_i)} r^{cmd}(t_x) -
  \sum_{\forall t_x\in \phi(t_j, t_i, t_j.\bar{A})} r^{cmd}(t_x) - r^{aries}(t_j)
\end{equation}

If we decide to create \aries\ log for more
than one transaction in $\otimes^{x}(t_i)$,
$\Delta(t_j, t_i)$ should be updated accordingly.
Let $\mathcal{T}_a\subset \otimes^{x}(t_i)$ be the
transactions with \aries\ logs.
We define an attribute set:
$$p(\mathcal{T}_a, t_j)=\bigcup_{\forall t_x\in \mathcal{T}\wedge c(t_x)>c(t_j)} t_x.\bar{A}$$
$p(\mathcal{T}_a, t_j)$ represent the attributes that are updated after $t_j$ by the transactions
with \aries\ logs.
Therefore, $\Delta(t_j, t_i)$ is adjusted as
\begin{eqnarray}\label{eq:benefit2}\nonumber
  \Delta(t_j, t_i, \mathcal{T}_a) &=& \sum_{\forall t_x\in \otimes^{x}(t_i)-\mathcal{T}_a} r^{cmd}(t_x) - r^{aries}(t_j)-\\
   && \sum_{\forall t_x\in \phi(t_j, t_i, t_j.\bar{A} - p(\mathcal{T}_a, t_j))} r^{cmd}(t_x)
\end{eqnarray}

\begin{definition}{\bf Optimization Problem}\\
For transaction $t_i$, finding a transaction set $\mathcal{T}_a$ to create \aries\ logs
so that $\sum_{\forall t_j\in \mathcal{T}_a} \Delta(t_j, t_i, \mathcal{T}_a)$ is maximized with the condition
$\sum_{\forall t_j\in \mathcal{T}_a} w^{aries}(t_j)\leq B_{i/o}$.
\end{definition}
Note that this is a simplified version of optimization problem,
as we only consider
a single transaction for recovery.
In real systems, if node $n_x$ fails, all
the transactions in $f(n_x)$ should be recovered.

The single transaction case of optimization
is analogous to the 0-1 knapsack problem,
while the more general case
is similar to the multi-objective knapsack problem.
It becomes
even harder when function $\Delta$ is also determined by the
correlations of transactions.

\subsubsection{Offline Algorithm}

We first introduce our offline algorithm designed for the
ideal case,
where the impending distribution of transactions is known.
The offline algorithm is only used to demonstrate the
basic idea of adaptive logging,
while our system employs its
online variant.
We use $\mathcal{T}$ to
represent all the transactions
from the last checkpoint to the point of failure.

For each transaction $t_i \in \mathcal{T}$,
we compute its benefit
as:
\begin{equation}\nonumber
  b(t_i)=\sum_{\forall t_j\in \mathcal{T} \wedge c(t_i)<c(t_j)} \Delta(t_i, t_j, \mathcal{T}_a) \times \frac{1}{w^{aries}(t_i)}
\end{equation}
Initially, $\mathcal{T}_a=\emptyset$.

We sort the transactions based on their benefit values. The one with the
maximal benefit is selected and added to $\mathcal{T}_a$.
All the transactions update their benefits accordingly based on
Equation \ref{eq:benefit2}.
This process continues until
$$\sum_{\forall t_j\in \mathcal{T}_a} w^{aries}(t_j)\leq B_{i/o}$$.
Algorithm \ref{algo:offline} outlines
the basic idea of the offline algorithm.

\begin{algorithm}
\footnotesize \caption{\label{algo:offline}{\tt Offline(TransactionSet $\mathcal{T}$)}}
\begin{algorithmic}[1]
\STATE Set $\mathcal{T}_a=\emptyset$, Map benefits;
\FOR{$\forall t_i\in \mathcal{T}$}
\STATE benefits[$t_i$] = computeBenefit($t_i$)
\ENDFOR
\WHILE{getTotalCost($\mathcal{T}_a$)$<B_{i/o}$}
\STATE sort(benefits)
\STATE $\mathcal{T}_a$.add(benefits.keys().first())
\ENDWHILE
\STATE return $\mathcal{T}_a$
\end{algorithmic}
\end{algorithm}

Since we need to re-sort all the transactions after each update to $\mathcal{T}_a$,
the complexity of the algorithm is $O(N^2)$, where $N$ is the number of transactions.
In fact, full sorting is not necessary for most cases,
because $\Delta(t_i, t_j, \mathcal{T}_a)$
should be recalculated, only if both $t_i$ and $t_j$ update a value of the same attribute.

\subsubsection{Online Algorithm}

Our online algorithm is similar to the offline version, except that we must
choose either \aries\ logging or command logging in real-time.
Since we have no knowledge about the distribution of
future transactions, we use a histogram
to approximate the distribution.
In particular, for all the attributes $A=(a_0,...,a_k)$
involved in transactions,
we record the number of transactions that read or write a specific attribute value,
and use the histogram to estimate the probability
of accessing an attribute $a_i$, denoted as $P(a_i)$. 
Note that attributes in $A$
may come from the same tuple or different tuples.
For tuple $v_0$ and $v_1$,
if both $v_0.a_i$ and $v_1.a_i$ appear in $A$, we will represent them as two
different attributes.

{\color{black}
As defined in section~\ref{sec:basic_idea},
$\phi(t_j, t_i, t_j.\bar{A})$ denotes the
transactions that commit between $t_j$ and $t_i$ and also update some attributes in $t_j.\bar{A}$.
}
As a matter of fact, we can rewrite as:
$$\phi(t_j, t_i, t_j.\bar{A})=\bigcup_{\forall a_i\in t_j.\bar{A}}\phi(t_j, t_i, a_i)$$
Similarly, let $S=t_j.\bar{A} - p(\mathcal{T}_a, t_j)$. 
The third term of Equation \ref{eq:benefit2} can be
computed as:
\begin{eqnarray}\label{eq:benefit3}\nonumber
  \sum_{\forall t_x\in \phi(t_j, t_i, S)} r^{cmd}(t_x) &=& \sum_{\forall a_x\in S } (\sum_{\forall t_x\in \phi(t_j, t_i, a_x)} r^{cmd}(t_x))
\end{eqnarray}

We use
a constant $R^{cmd}$ to denote the average recovery cost of command logging.
The above Equation can then be simplified as:
\begin{eqnarray}\label{eq:benefit4}
  \sum_{\forall t_x\in \phi(t_j, t_i, S)} r^{cmd}(t_x) &=& \sum_{\forall a_x\in S } (P(a_x) R^{cmd})
\end{eqnarray}
The first term of Equation \ref{eq:benefit2} estimating the cost of recovering
$t_j$'s time-dependent transactions using command logging can be efficiently computed in
real-time, if we maintain the dependency graph.
Therefore, by
combining Equation \ref{eq:benefit2} and \ref{eq:benefit4}, we can estimate
the benefit  $b(t_i)$
of a specific transaction
during online processing.
Suppose we have already created \aries\ logs
for transactions in $\mathcal{T}_a$,
the benefit should be updated based on Equation \ref{eq:benefit2}.

The last problem is how to define a threshold $\gamma$.
When the benefit
of a transaction is greater than $\gamma$, we create \aries\
log for it.
Let us consider the ideal case.
Suppose the node fails while processing
$t_i$ for which we have just created its \aries\ log.
This log achieves the
maximal benefit which can be estimated as:
$$b^{opt}_i=(\mathds{N} R^{cmd}\sum_{\forall a_x\in A}P(a_x) - R^{aries}) \times \frac{1}{W^{aries}}$$
where $\mathds{N}$ denotes the number of transactions before $t_i$, $R^{aries}$
and $W^{aries}$ are the average recovery cost and I/O cost of \aries\ log respectively.

Suppose the failure happens arbitrarily following a Poisson distribution with
parameter $\lambda$.
That is,
the expected average failure time is $\lambda$.
{\color{black}
Let $\rho(s)$ be the function that returns the number of committed transactions in $s$.
Before failure,
there are approximate $\rho(\lambda)$ transactions.
}
So the
possibly maximal benefit is:
$$b^{opt}=(\rho(\lambda) R^{cmd}\sum_{\forall a_x\in A}P(a_x) - R^{aries}) \times \frac{1}{W^{aries}}$$
We define our threshold as $\gamma = \alpha b^{opt}$,
 where $\alpha$ is a tunable
parameter.

Given a limited I/O budget,
we can create approximately $\frac{B_{i/o}}{W^{aries}}$
\aries\  log records.
As failures may happen randomly at anytime,
the log should
be evenly distributed over the timeline.
More specifically,
the cumulative distribution function (CDF) of the Poisson distribution is
$$P(fail\_time<k)=e^{-\lambda}\sum_{i=0}^{\lfloor k \rfloor}\frac{\lambda^i}{i!}$$
Hence, at the $k$th second,
we can maximally create
$$quota(k)=P(fail\_time<k) \frac{B_{i/o}}{W^{aries}}$$
log records.
When time elapses,
we should check whether we still have the
quota for \aries\ log.
If not, we will not create any new \aries\ log for the time being.

Finally, we summarize the idea of
online adaptive
logging scheme in Algorithm \ref{algo:online}.

\begin{algorithm}
\footnotesize \caption{\label{algo:online}{\tt Online(Transaction $t_i$, int $usedQuota$)}}
\begin{algorithmic}[1]
\STATE int $q$ = getQuota($s(t_i)$)- $usedQuota$
\IF{$q>0$}
\STATE Benefit $b$=computeBenefit($t$)
    \IF{$b>\tau$}
    \STATE $usedQuota$++
    \STATE createAriesLog($t_i$)
    \ELSE
    \STATE createCommandLog($t_i$)
    \ENDIF
\ENDIF
\end{algorithmic}
\end{algorithm}

\subsection{In-Memory Index}

To help compute the benefit of each transaction,
we create an
in-memory inverted index in our master node.
Figure \ref{fig:index} shows the structure of the index.
The index data are organized by table ID and tuple ID.
For each specific tuple, we record the transactions that
read or write its attributes.
 As an example, in Figure \ref{fig:index},
transaction $t_2$ reads the {\em number} of tuple 10001 and updates the
{\em price} of tuple 10023.

\begin{figure}
\centering
\includegraphics[scale=0.8]{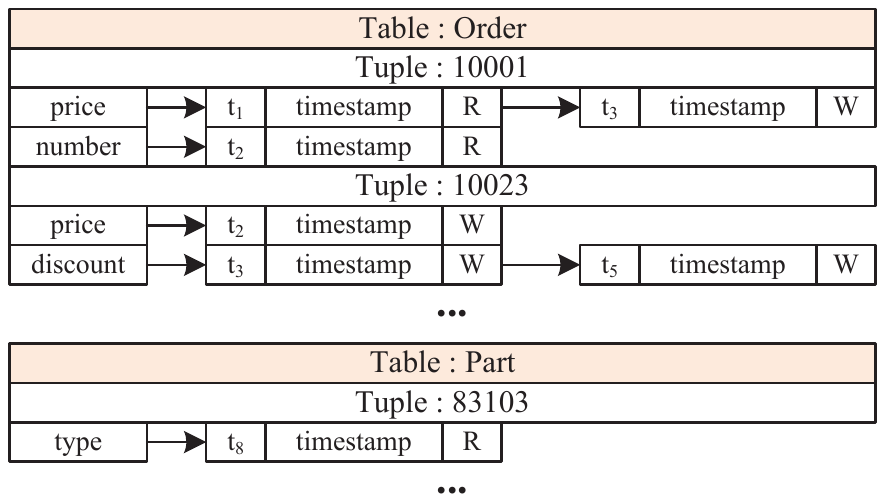}
\vspace {-10pt}
\caption{In-memory index}\label{fig:index}
\vspace {-10pt}
\end{figure}

Using the index, we can efficiently
retrieve the time-dependent transactions. For transaction $t_5$,
let $A_5$ be the attributes that it accesses. We search the
index to retrieve all transactions that update any attribute in $A_5$
before $t_5$.
In Figure \ref{fig:index},
because {\em discount} value of tuple
10023 is updated by $t_5$, we check its list and find that
$t_3$ updates the same value before $t_5$.
Therefore, $t_3$ is a
time-dependent transaction of $t_5$.
In fact, the index can also be employed to recover the
dependency graph of transactions.
We omit the details as it
is quite straightforward.

%

\section{Experimental Evaluation}
In this section,
we conduct the runtime cost analysis of our proposed adaptive logging
and compare its query processing and recovery performance 
against other approaches'.
Since both traditional \aries\ logging and command logging are already supported by
H-Store,
for consistency,
we implement distributed command logging and adaptive logging approaches on
top of the H-store as well.
In summary, we have the following four approaches:
\begin{itemize}
\item {\bf ARIES} --  \aries\ logging.
\item {\bf Command} -- command logging  proposed in \cite{DBLP:conf/icde/MalviyaWMS14}.
\item {\bf Dis-Command} -- distributed command logging approach.
\item {\bf Adapt-x\%} --  adaptive logging approach, where we create \aries\ log for x\% of distributed
transactions that involve multiple compute nodes.
{\color{black} When x=100, adaptive logging adopts a simple
strategy: \aries\ logging for all distributed transactions and command logging for all single-node transactions.}
\end{itemize}

All the experimental evaluations
are conducted on our in-house cluster of 17 nodes.
The head node is a powerful server equipped with an Intel(R) Xeon(R) 2.2 GHz 48-core CPU and  64GB RAM,
and the compute nodes are
 blade servers with an Intel(R) Xeon(R) 1.8 GHz 8-core CPU and 16GB RAM.
H-Store is deployed on the 16 compute nodes
by partitioning the databases evenly.
Each node
runs a transaction site.
By default,
only 8 sites in H-Store are used,
except in the scalability test.
We use the TPC-C benchmark\footnote{http://www.tpc.org/tpcc/},
with 100 clients being run concurrently
in the head node to submit their transaction requests one by one.
{\color{black}
As H-Store does not support replications,
we measure the effect of replication using
its commercial version VoltDB with
Voter benchmark\footnote{http://hstore.cs.brown.edu/documentation/deployment/benchmarks/voter/}\cite{DBLP:conf/icde/MalviyaWMS14}.
}
\subsection{Runtime Cost Analysis}
We first compare the overheads of different logging strategies during the runtime.
In this experiment,
we use the number of New-Order transactions processed per second
as the metric to evaluate
the effect of different
logging methods on the throughput of the system.
To illustrate the behaviors of different logging techniques,
we adopt two workloads:
one workload contains only local transactions
while the other one contains both local and distributed transactions.

\begin{figure*}[ht]
\centering
\begin{minipage}[b]{0.23\linewidth}
\includegraphics[width=1\textwidth,height=0.9\textwidth]{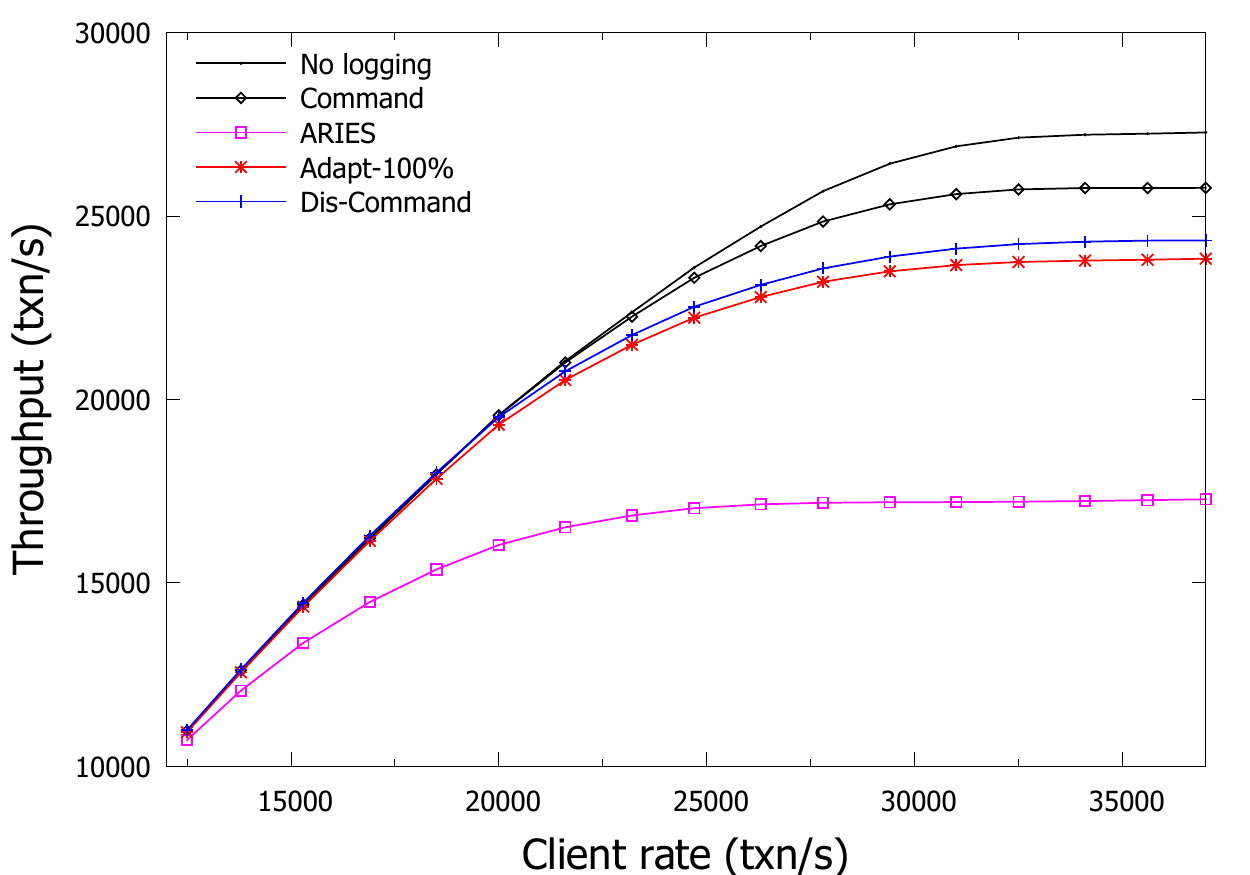}
\caption{\small Throughput without distributed transactions\\}
\label{fig:throughout_1}
\end{minipage}
\hspace{-3mm}
\hfill
\begin{minipage}[b]{0.23\linewidth}
\includegraphics[width=1\textwidth,height=0.9\textwidth]{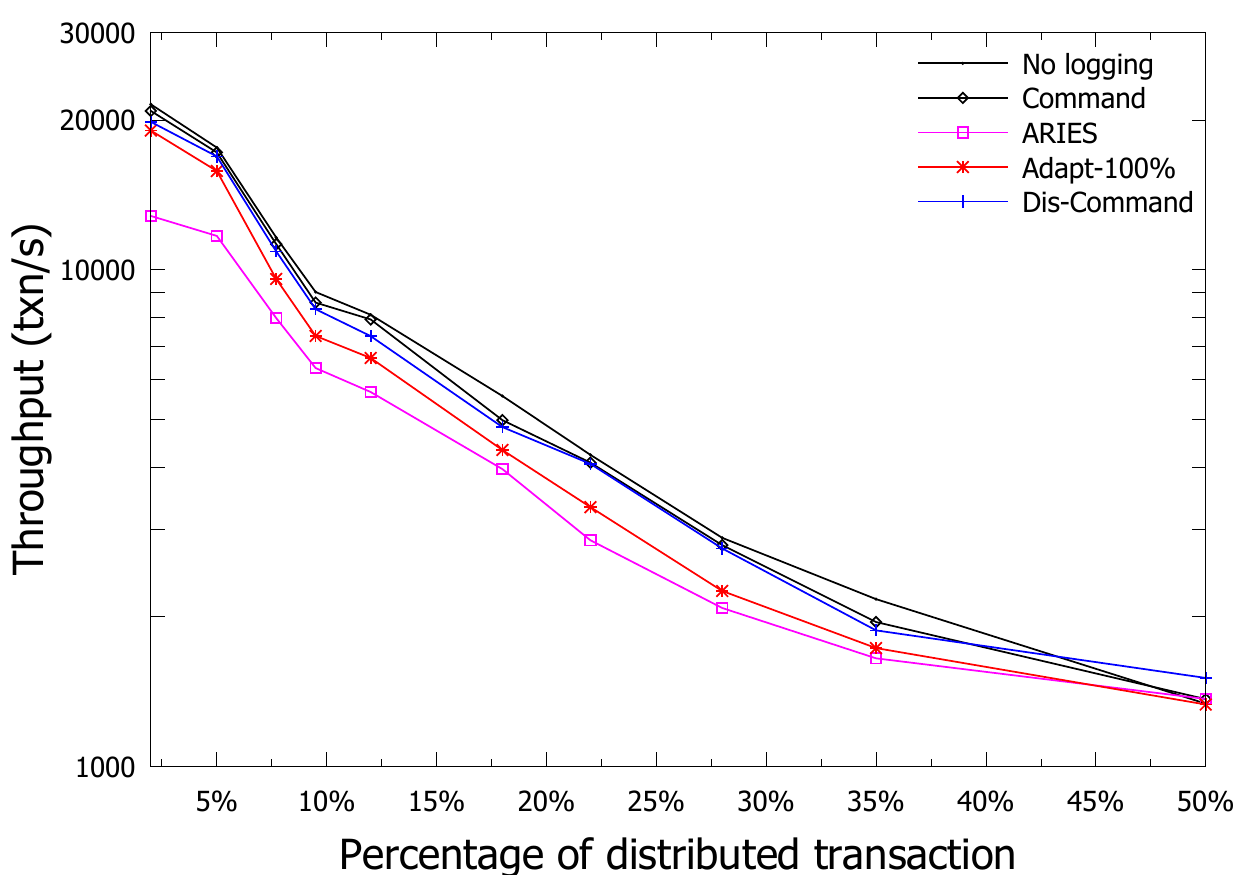}
\caption{\small Throughput with distributed transactions (with log scale on y-axis) }
\label{fig:dis_thr_1}
\end{minipage}
\hspace{-3mm}
\hfill
\begin{minipage}[b]{0.23\linewidth}
\includegraphics[width=1\textwidth,height=0.9\textwidth]{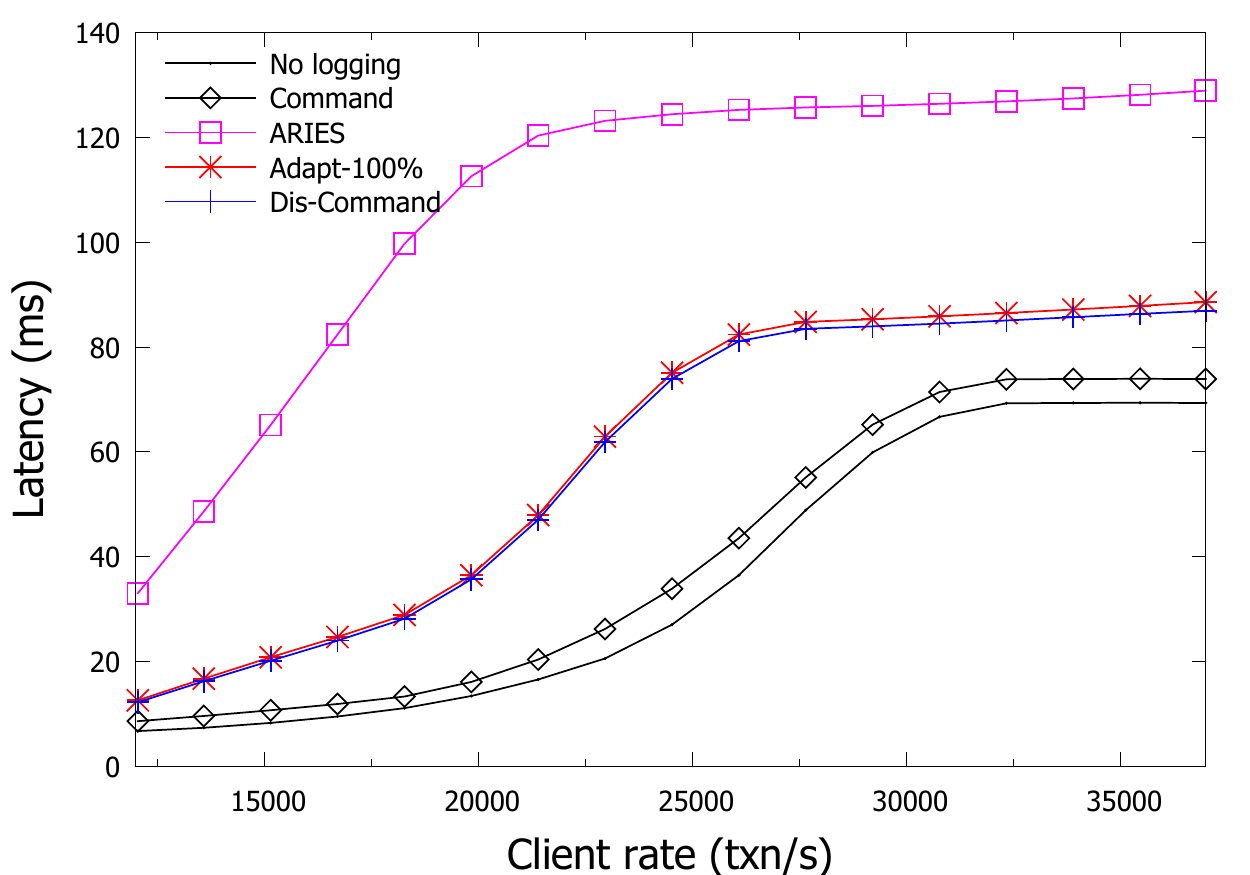}
\caption{\small Latency without distributed transactions\\}
\label{fig:latency_1}
\end{minipage}
\hspace{-3mm}
\hfill
\begin{minipage}[b]{0.23\linewidth}
\includegraphics[width=1\textwidth,height=0.9\textwidth]{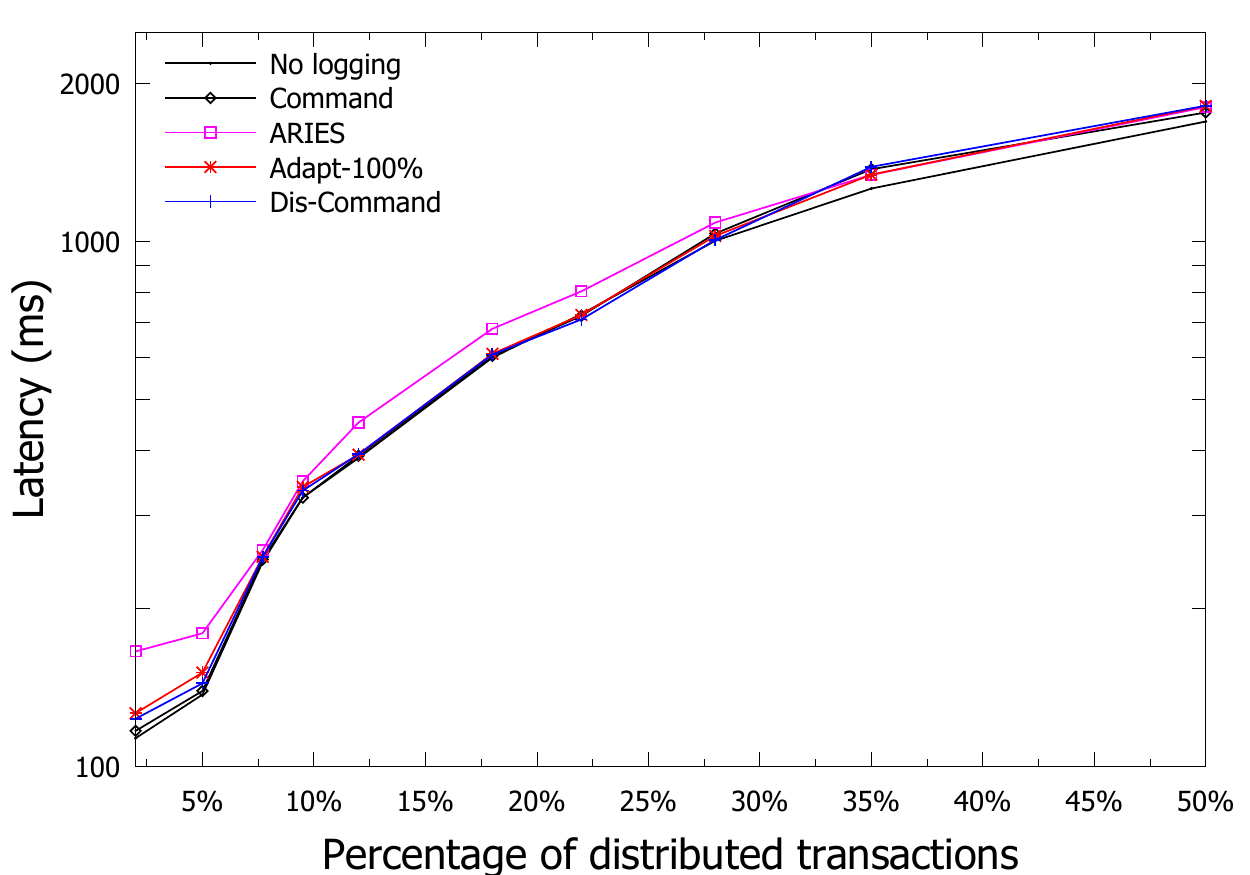}
\caption{\small Latency with distributed transactions (with log scale on y-axis)}\label{fig:dis_lat_1}
\end{minipage}
\vspace {-10pt}
\end{figure*}

\subsubsection{Throughput Evaluation}
Figure \ref{fig:throughout_1} shows the throughput of different approaches when only
local transactions are involved and we vary the client rate, namely the total number of transactions submitted by all client threads per second.
{\color{black}
When the client rate is low,
the system is not saturated and all incoming transactions can be completed
within a bounded waiting time.
}
Although different logging approaches incur different I/O costs,
all logging approaches show a fairly similar performance due to the fact that
I/O is not the bottleneck.
However, as the client rate increases,
{\color{black}
the system with \aries\ logging saturates the earliest at around the input rate
of 20,000 transactions per second.
}
The other approaches (i.e., adaptive logging, distributed logging and command logging),
on the other hand, reach the saturation point around 30000 trasanctions per second which
is slightly lower than the ideal case (represented as no logging approach).
The throughput of distributed command logging is slightly lower than that of command logging
primarily due to the overhead of extra book-keeping involved in distributed command logging.

Figure \ref{fig:dis_thr_1} shows the throughput variation
(with log scale on y-axis)
when there exist distributed transactions.
We set the client rate to 30,000 transactions per second
to keep all sites busy and vary the percentage
of distributed transactions from 0\% to 50\%,
so that the system performance is affected
by both network communications and logging.
To process distributed transactions, 
multiple sites have to cooperate with each other, and
as a result,
the coordination cost typically increases with the number of participating sites.
To guarantee the correctness at the commit phase,
we use the two-phase commit protocol which is supported by the H-store.
In contrast to the local processing shown in Figure \ref{fig:throughout_1},
the main bottleneck of distributed processing gradually shifts
from logging cost to communication cost.
{\color{black}
Compared to local transaction,
distributed transaction always incurs extra network overhead,
with which the effect of logging is less significant.
} 

As shown in Figure \ref{fig:dis_thr_1},
when the percentage of distributed transactions is less than 30\%,
the throughput of the other logging strategies are still 1.4x better
than \aries\ logging.
In this experiment,
the threshold $x$ of adaptive logging is set to 100\%,
where we create  \aries\ logs
for all distributed transactions.
The purpose is to test
the worst performance of adaptive logging.

Command logging is claimed to
be more suitable for local transactions with multiple updates
and \aries\ logging is preferred for distributed
transaction with few updates ~\cite{DBLP:conf/icde/MalviyaWMS14}.
This claim is true in general.
However, since the workload does change over time,
neither command logging nor \aries\ logging can fully
satisfy all access patterns.
On the other hand,
our proposed adaptive logging
has been designed to adapt to the real time variability in workload characteristics.

\subsubsection{Latency Evaluation}
Latency typically exhibits similar trend to that of throughput,
but in the opposite direction,
and
the average latency of different logging strategies
is expected to increase as the client rate increases.
Figure \ref{fig:latency_1} shows that the latency of distributed command logging
is slightly higher than command logging.
However,
it still performs much better than \aries\ logging.
Like other OLTP systems,
H-Store first buffers the incoming transactions in a transaction queue.
The transaction engine will pull them out and process them one by one.
H-Store adopts single-thread mechanism,
in which each thread is responsible for one partition in
order to reduce the overhead of concurrency control.
When the system becomes saturated,
newly arrived transactions need to wait in the queue,
which leads to a higher latency.

Transactions usually commit at the server side which
sends response information back to the client.
Before committing, all log entries are flushed to disk.
The proposed distributed command logging will materialize command log entries and
footprint information before the transactions commit.
When a transaction completes,
it compresses the footprint information 
and as a result,
contributes to a slight delay in response.
However,
the penalty becomes negligible
when many distributed transactions are involved.
{\color{black}
Distributed transaction usually incur a higher latency due to the extra network overhead. In our experiments, group commit is enabled to optimize the disk I/O utility.
}
With an increasing number of distributed transactions,
the latency is less affected by the logging,
all approaches
show a similar performance as shown in Figure \ref{fig:dis_lat_1}. 

\subsubsection{Online Algorithm Cost of Adaptive Logging}
Figure~\ref{fig:online} shows the overhead of the online algorithm. We analyze the computation cost of every minute
by showing the percentages of time taken for
making online decisions and for processing transactions.
The overhead of the online algorithm increases when the system runs for a longer time, because
more transaction information is maintained in the in-memory index.
However, we observe that it takes
only 5 seconds to execute the online algorithm in the 8th minute,
the main bulk of time is still spent on
transaction processing.
Further,
the online decision cost will not grow in an unlimited manner as
it is bounded by the checkpointing interval.
Since the online decision is made before the execution of a transaction,
we could overlap the computation 
while the transaction is waiting in the transaction queue to further reduce
the latency.

\begin{figure*}
\centering
\begin{minipage}[b]{0.32\linewidth}
\includegraphics[width=1\textwidth,height=0.6\textwidth]{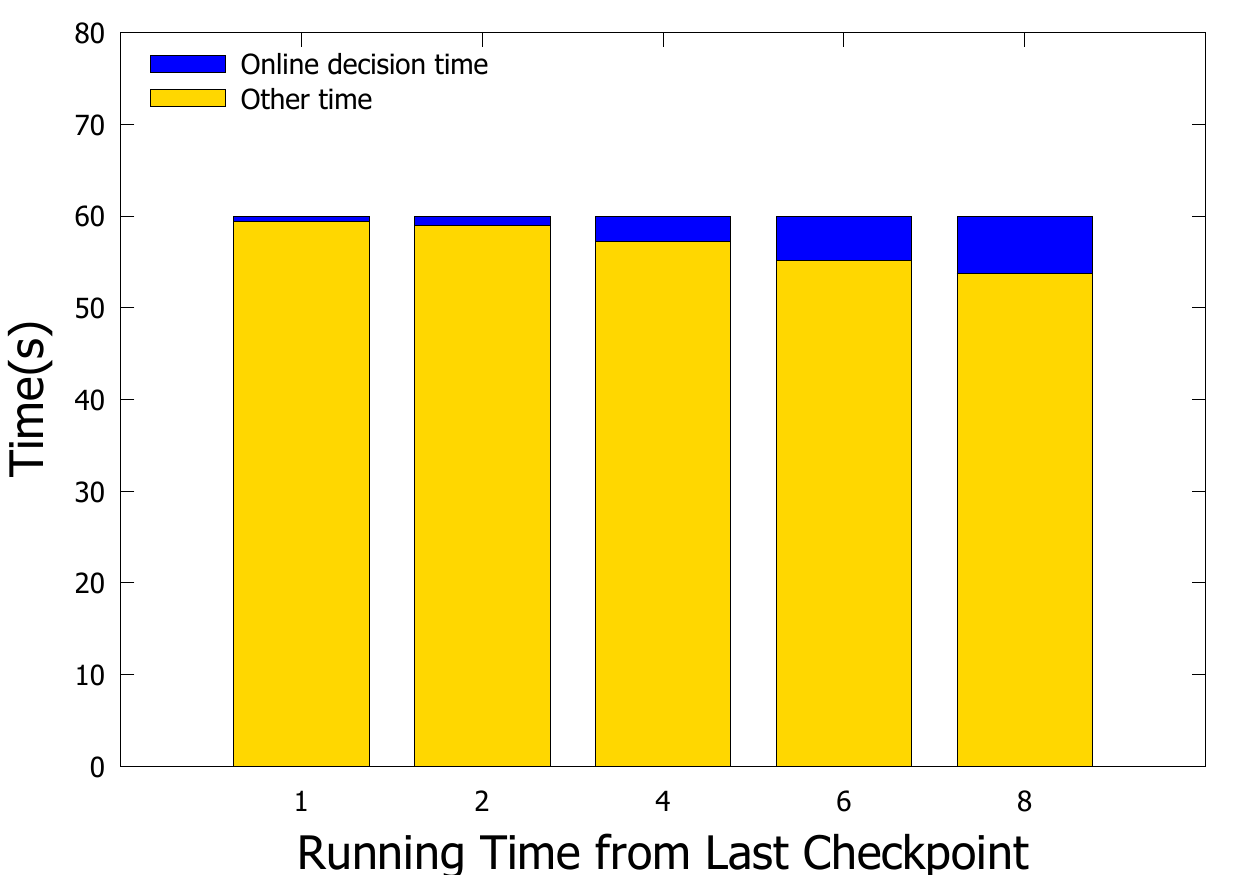}
\caption{\small Cost of online decision algorithm}
\vspace{12pt}
\label{fig:online}
\vspace{-0mm}
\end{minipage}
\hspace{-2mm}
\begin{minipage}[b]{0.64\linewidth}
\begin{subfigure}[b] {.5\linewidth}
  \includegraphics[width=1\textwidth,height=0.6\textwidth]{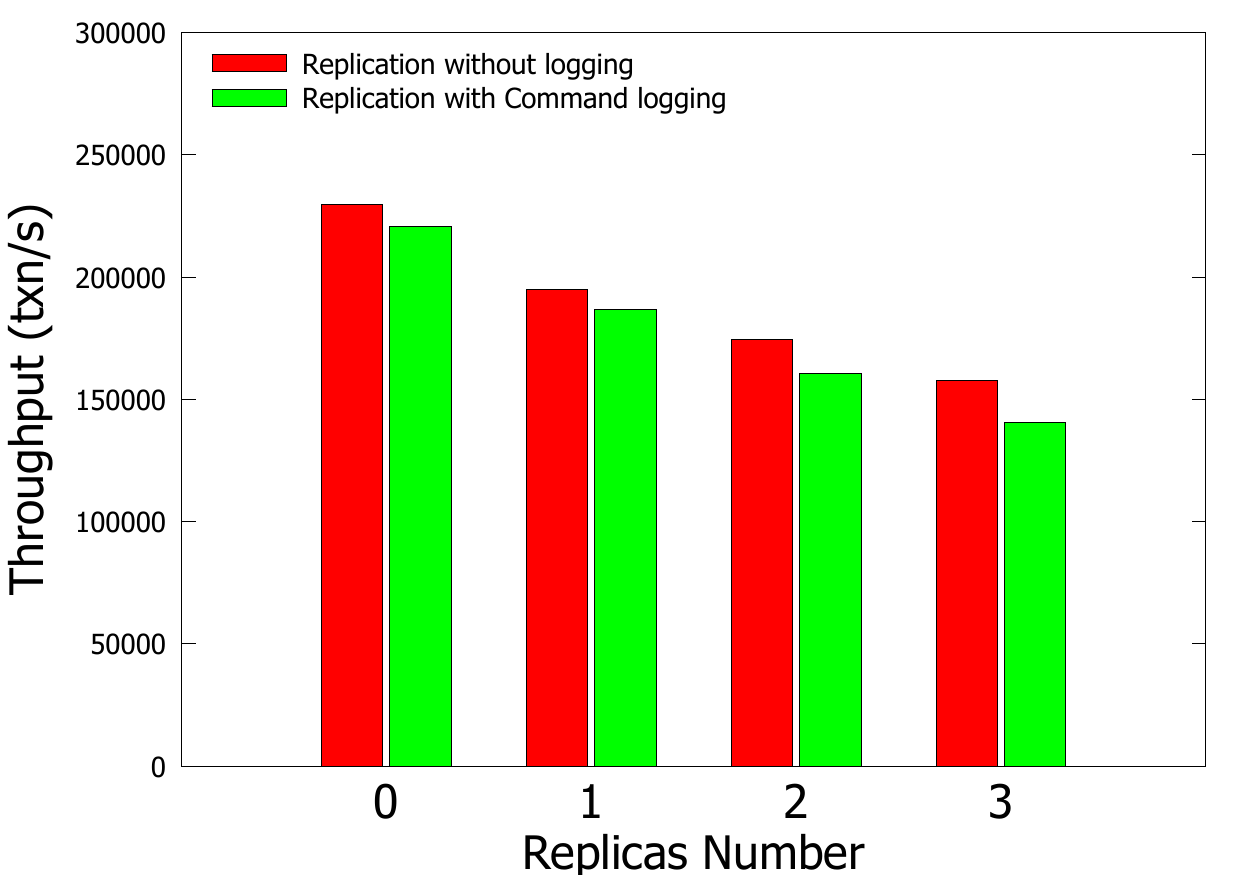}
  \caption{\small Replication with 8 working sites}
  \vspace{10pt}
  \label{fig:rep_1}
\end{subfigure}
\vspace{-5mm}
\begin{subfigure}[b] {.5\linewidth}
 \includegraphics[width=1\textwidth,height=0.6\textwidth]{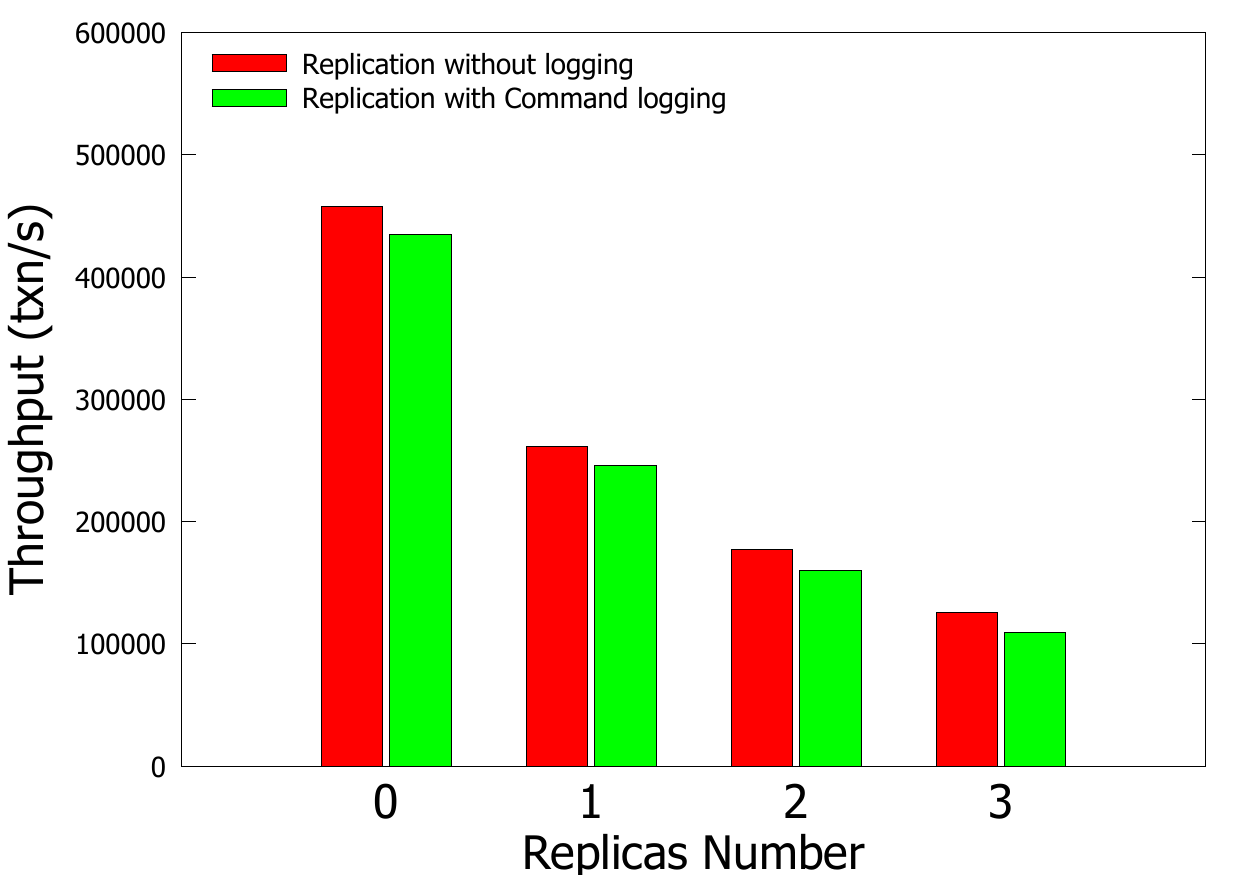}
 \caption{\small Replication with 24 unique sites}
 \vspace{10pt}
 \label{fig:rep_2}
\end{subfigure}
\caption{\small Effects of replication}\label{fig:rep}
\end{minipage}
\vspace{-5mm}
\end{figure*}
\subsubsection{Effect of Replication}
Replication is often used to ensure database correctness,
improve performance, and provide
high availability.
However,
it also incurs high synchronization overhead to achieve strong consistency.
Figure~\ref{fig:rep_1} shows the results with different number of replicas,
where the number of working execution sites is fixed to 8.
{\color{black}
With a fixed number of sites, creating more replicas increases the workload per node,
and more computation resources (e.g., CPU and memory) are used.
The performance drops by about 37\% when there are three replicas.
If available resources are limited,
the system's performance is very sensitive to the number for replicas.
In Figure~\ref{fig:rep_2}, 
we increase the number sites to 24. Namely, $N$ sites are maintaining
the original data, while the other $24-N$ sites are handling the replicas.
In this case, each site will have the same workload as the non-replication case.
However, we find that if 3 replicas are enabled,
the performance degrades by 72.5\% when compared to no replication.
}



\subsection{Recovery Cost Analysis}
In this experiment,
we evaluate the recovery performance of different logging approaches.
We simulate two scenarios.
In the first scenario, we run the system for one minute to process
the transactions and then shut down an arbitrary site to simulate the recovery process.
In the second scenario, each site will process 30,000 transactions
before the process of a random site is terminated forcibly
so that the recovery algorithm can be invoked.
In both scenarios, we measure the elapsed time to recover the failed site.
\begin{figure}[h]
\centering
\vspace {-5pt}
\begin{minipage}[b]{\linewidth}
\includegraphics[width=0.85\textwidth,height=0.45\textwidth]{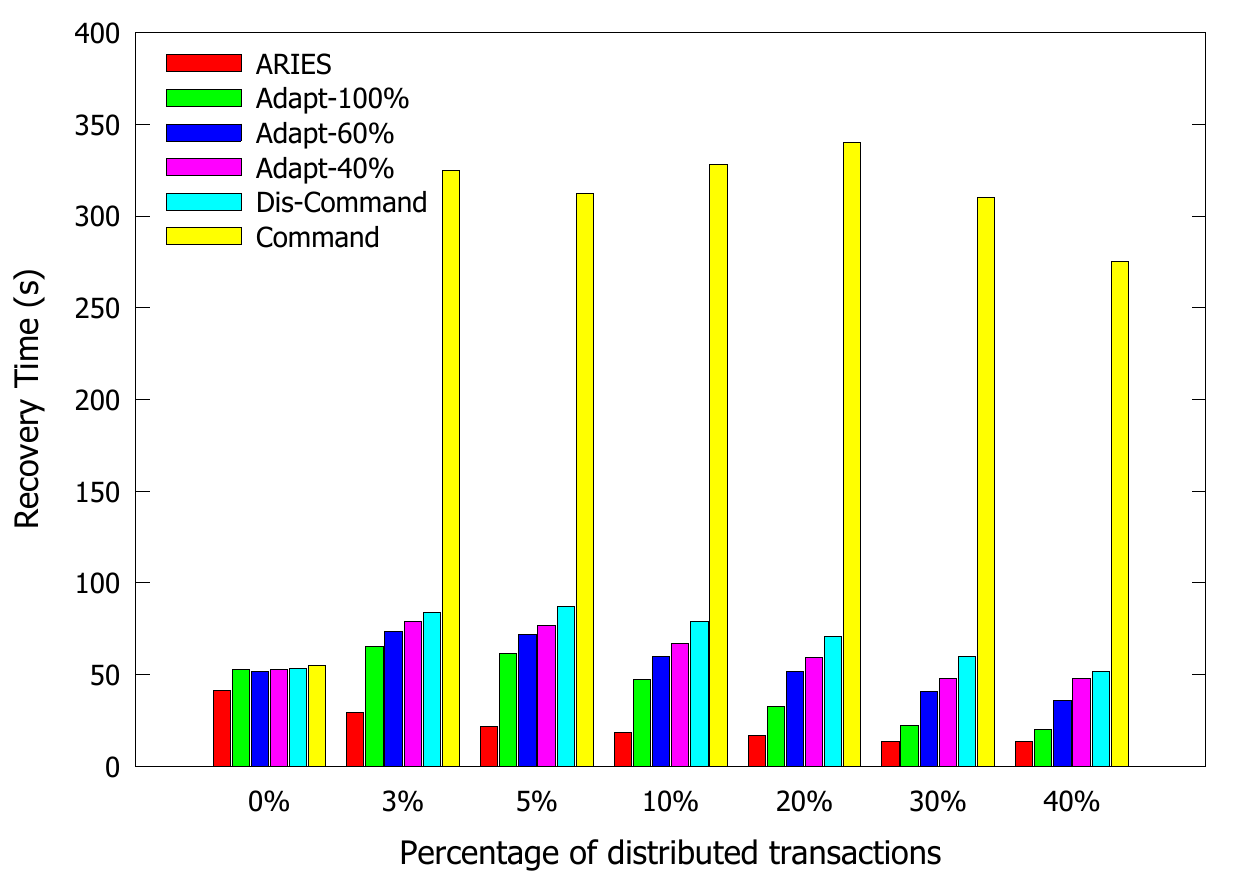}
\vspace{-10pt}
\caption{\small {1 minute after the last checkpoint}}
\label{fig:rec_time1}
\end{minipage}
\vspace{-8pt}
\begin{minipage}[b]{\linewidth}
\includegraphics[width=0.85\textwidth,height=0.45\textwidth]{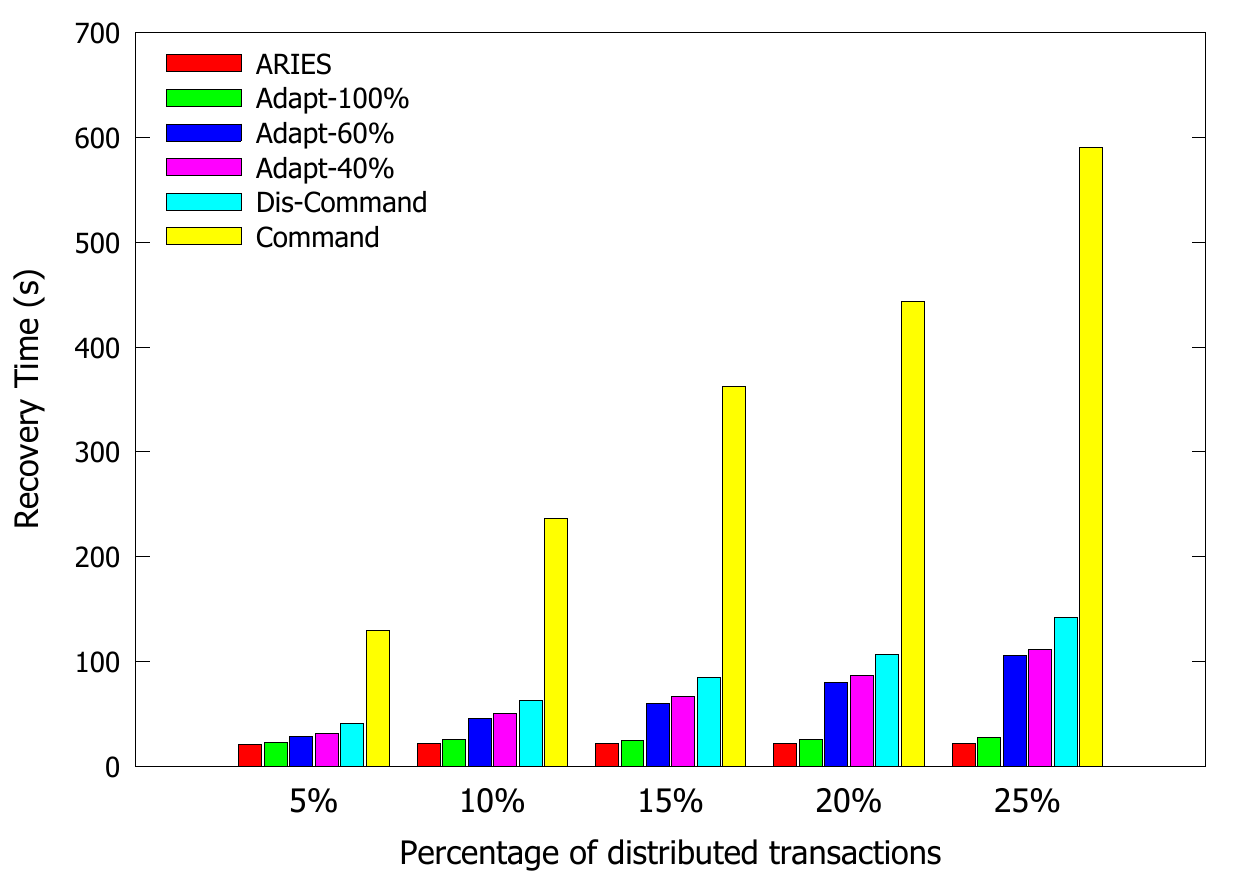}
\vspace{-10pt}
\caption{\small {After 30,000 transactions committed at each site}}\label{fig:rec_time2}
\end{minipage}
\vspace {-10pt}
\end{figure}

\subsubsection{Recovery Evaluation}
{\color{black}
Except for \aries\ logging,
the recovery times of the other methods are affected by two factors,
the number of committed transactions and 
the percentage of distributed transactions.
Figure \ref{fig:rec_time1} and \ref{fig:rec_time2} summarize the recovery times of
the four logging approaches.
Intuitively, the recovery time is proportional to the number of transactions that
must be reprocessed after a failure. 
In Figure \ref{fig:rec_time1},
we note that fewer transactions can be
completed within a given time as the percentage of distributed transactions is increased.
So even though recovering a distributed transaction is costlier, with increased
percentage of distributed transactions there are a fewer number of transactions processed
per unit of time.
Figure~\ref{fig:rec_time1} demonstrates this trade-off in that 
the percentage of distributed transactions does not adversely affect the
recovery times since the cost of recovering distributed transactions is offset
by the reduction in the number of distributed transaction in a fixed unit of time.
For the experiment shown in Figure~\ref{fig:rec_time2},
when we require all sites to complete at least 30,000 transactions,
a higher recovery cost is observed with the increase in the percentage
of distributed transactions.
}

In all cases, \aries\ logging shows the best performance and is not affected
by the percentage of distributed transactions, while
command logging is always the worst. 
Our distributed command logging significantly
reduces the recovery overhead of the command logging,
achieving a 5x improvement.
The adaptive logging further improves the performance by tuning the trade-off
between recovery cost and transaction processing cost as discussed below.

{\bf A{\small{RIES}} logging} supports independent parallel recovery,
since each \aries\ log entry contains one tuple's data
image before and after each operation.
Intuitively, the recovery time of \aries\ logging should be less than
the time interval between checkpointing and the failure time,
{\color{black}
since read operations or transaction logics does not need to be repeated during the recovery.
}
As a fine-grained logging approach,
\aries\ logging is not affected by the percentage of distributed transactions
and the workload skew.
The recovery time is typically proportional to the number of
committed transactions.

{\bf Command logging} incurs much higher overhead when performing a recovery
process involving distributed transactions (even for a small portion, say 5\%).
This observation can be explained
by Figure~\ref{fig:cost_cmdlog} which shows the recovery
time of command logging with one failed site which has 30,000 committed transactions
from the last checkpoint.
The ideal performance of command logging
is achieved by redoing all transactions in all sites without any synchronization.
Of course,
this results in an inconsistent state and
we only use it here to underscore
the overhead of synchronization.
If no distributed transaction is involved,
command logging can provide a similar performance as other schemes,
because dependencies can be resolved within each site.

{\bf Distributed command logging} effectively
{\color{black}
reduces the recovery time compared to the command logging, as shown in Figure~\ref{fig:rec_time1}.
On the other hand, Figure~\ref{fig:rec_time2} shows that the performance of distributed command logging is less sensitive to the percentage of distributed transactions when compared
to command logging.
}
One additional overhead of distributed command logging is the cost of
scanning the footprints to build the dependency graph.  For 1 minute
workload, the time of building dependency graph increases from 2s to
5s when the percentage of distributed transactions ranges from 5\% to
25\%.  Compared to the total recovery cost, the time for building the
dependency graph is fairly negligible.  



\begin{figure}[bt]
\centering
\vspace{0mm}
\begin{minipage}[b]{0.45\linewidth}
\includegraphics[width=1\textwidth,height=0.9\textwidth]{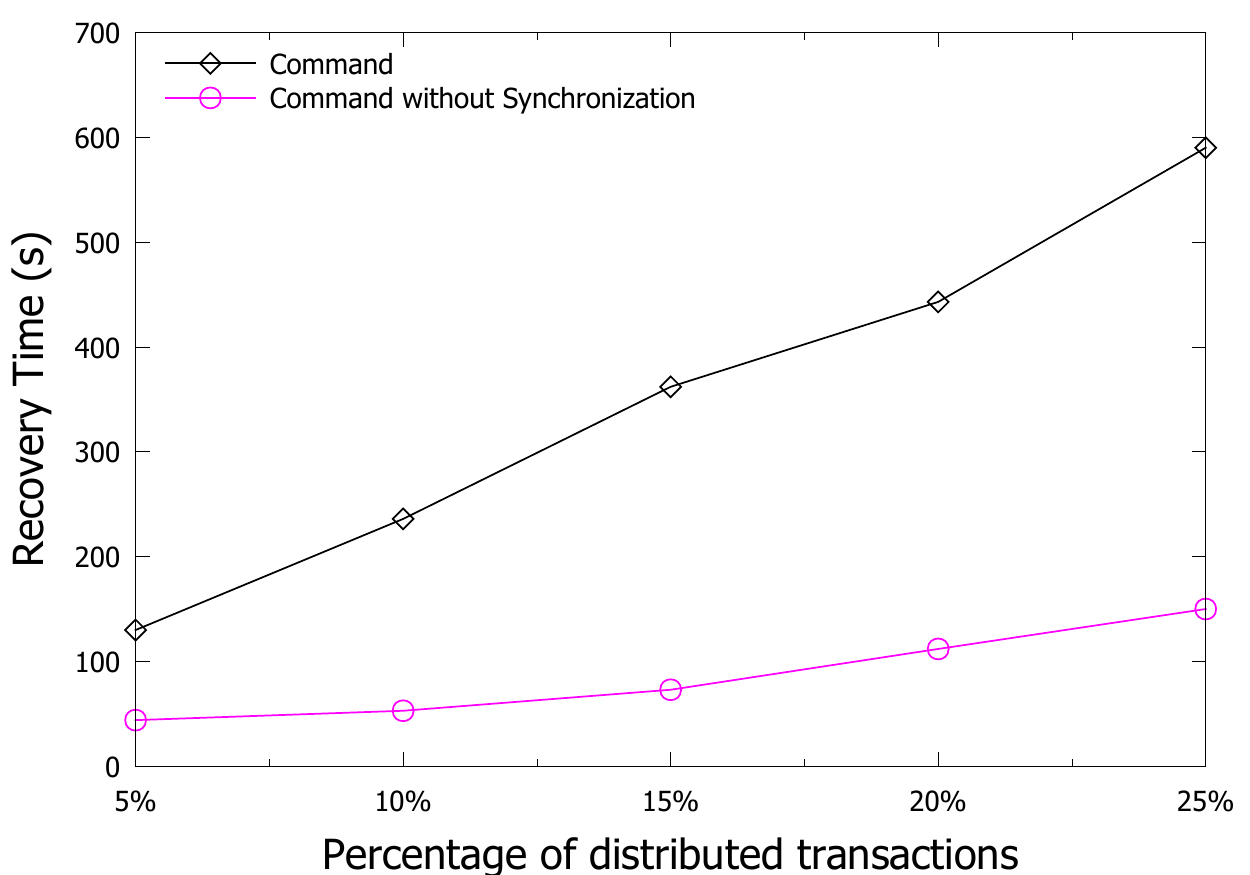}
\caption{\small Synchronization cost  of command logging}
\label{fig:cost_cmdlog}
\end{minipage}
\hspace{-0mm}
\begin{minipage}[b]{0.45\linewidth}
\includegraphics[width=1\textwidth,height=0.9\textwidth]{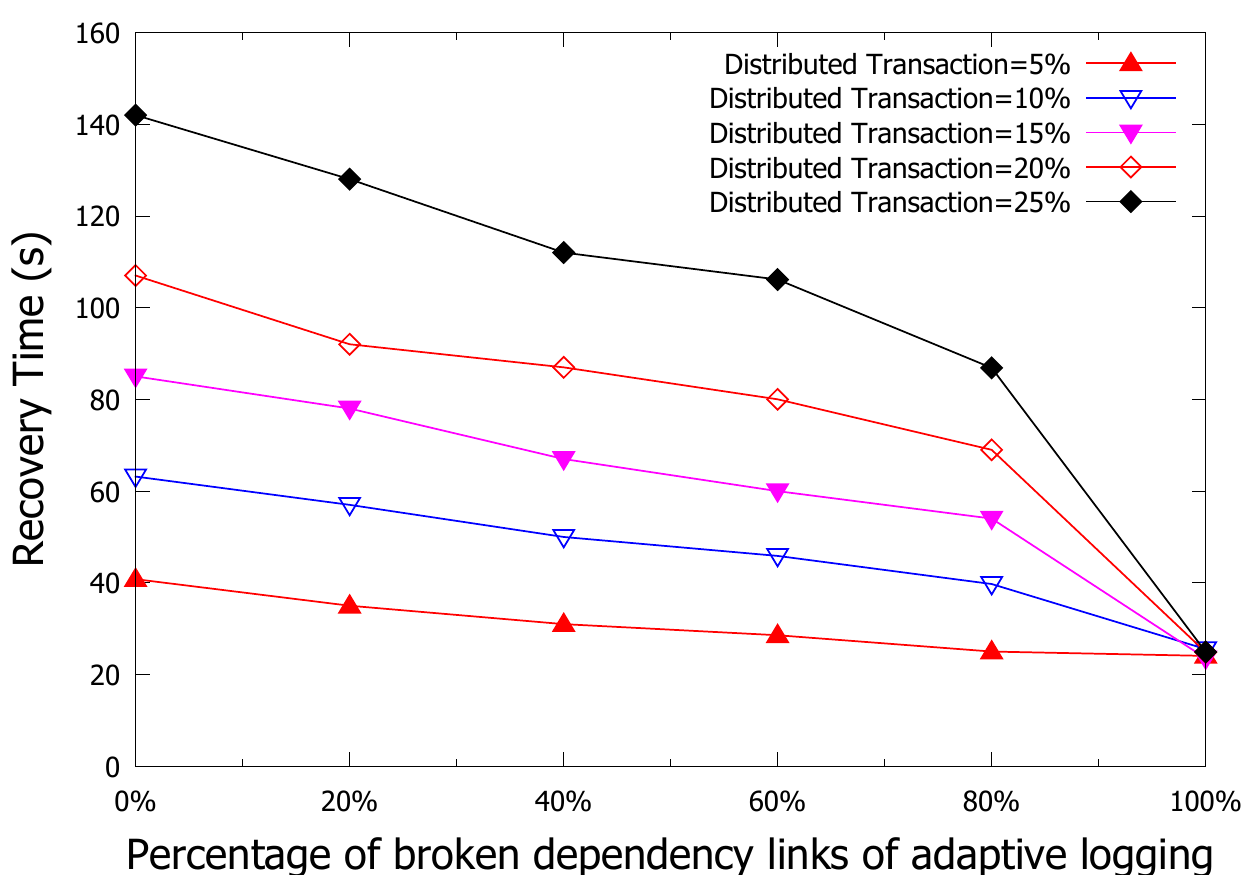}
\caption{\small Recovery performance with varying $x$}
\label{fig:ada_ratio}
\end{minipage}
\vspace{-5mm}
\end{figure}

{\bf Adaptive logging} technique selectively builds the \aries\ log and command log.
To reduce the I/O overhead of adaptive logging, in our online algorithm,
we set a threshold $B_{i/o}$ in online algorithm.
So at most,
$N=\frac{B_{i/o}}{W^{aries}}$ \aries\ logs can be created.
In this experiment,
we use a dynamic threshold, by setting
$N$ as $x$ percentage of the total number of distributed transactions.
In Figure \ref{fig:rec_time1} and \ref{fig:rec_time2},
$x$ is set as 40\%, 60 \% or 100\% to tune the recovery cost
and transaction processing cost. 
In all the tests,
the recovery
performance of adaptive logging is much better
than command logging.
It only performs slightly worse than the pure \aries\ logging.
As $x$ increases, more
\aries\ logs are created by adaptive logging, which results in the reduction of
the recovery time.
In the extreme case,
we create \aries\ log for every distributed transaction by setting $x=100\%$.
Then, all dependencies of distributed transactions are
resolved using \aries\ logs, and
each
site can process its recovery independent of the others.

Figure \ref{fig:ada_ratio} shows the effect of
$x$ on the recovery performance.
We vary the percentage of distributed
transactions and show the results with different $x$ values.
When $x=100\%$,
the recovery times are the same for all,
independent of the percentage of distributed transactions,
because all dependencies have been resolved.
On the contrary,
adaptive logging will degrade to distributed command logging,
if we set $x=0$.
In this case,
more distributed transactions result in higher recovery cost.

{\color{black}
Table~\ref{tb:number} shows the number of transactions that are reprocessed during the recovery in Figure~\ref{fig:rec_time2}.
Compared to command logging, distributed command logging and adaptive logging
efficiently reduce the number of transactions that need to be reprocessed.
}
\begin{table}[h]
\scriptsize
\caption{Number of reprocessed transactions}\label{tb:number}
\vspace{-5pt}
\hspace{-15pt}
\vspace{-5pt}
\begin{tabular}{|c|c|c|c|c|c|}
\hline
{\bf Percentage} & {\bf Command} & {\bf Dis-Command} & {\bf Adapt-40\%} & {\bf Adapt-60\%} & {\bf Adapt-100\%} \\ \hline 
0\%   & 30031 & 30015 & 30201 & 30087 & 30076 \\ \hline 
5\%   & 239321 & 35642 & 33742 & 32483 & 29290 \\ \hline 
10\%  & 240597 & 39285 & 36054 & 34880 & 30674 \\ \hline 
15\%  & 240392 & 42979 & 39687 & 37496 & 32201 \\ \hline 
20\%  & 239853 & 48132 & 43808 & 40912 & 33994 \\ \hline 
25\%  & 240197 & 57026 & 50465 & 46095 & 35617 \\ \hline 
\end{tabular}
\vspace{-10pt}
\end{table}

\subsubsection{Overall Performance Evaluation}
The intuition of the adaptive logging approach is
to balance the tradeoff between recovery and
transaction processing time.
It is widely expected that
when commodity servers are used in a large number,
failures are no longer an exception~\cite{vishwanath2010characterizing}.
That is, the system must be able to recover efficiently when a failure occurs
and provide a good overall performance.
In this set of experiments,
we measure the overall performances of different approaches.
In particular, we run the system for three hours and intentionally
shut down a random node based on a predefined failure rate.
The system will iteratively process transactions
and perform recovery,
and a new checkpoint is created every 10 minutes.
Then, the total throughput of the entire system is computed as the average number of transactions
processed per second in the three hours.

We show the total throughput for varying failure rate
from Figure \ref{fig:overall_5_response} to Figure \ref{fig:overall_20_response} with three different mixes of distributed transactions.
\aries\ logging is superior to
the other approaches when the failure rate is very high (e.g., there is one failure every 5 minutes).
When the failure rate is low,
distributed command logging shows the best performance,
because it is just
slightly slower than command logging for transaction processing,
but recovers much faster than command logging.
As the failure rate drops,
Adapt-100\% approach cannot provide a comparable performance
to command logging,
because Adapt-100\% creates the \aries\ log for every distributed transaction
which is too costly in transaction processing.

\begin{figure*}[bt]
\centering
\begin{minipage}[b]{0.75\linewidth}
\begin{subfigure}[b]{0.32\linewidth}
\includegraphics[width=1\textwidth,height=0.9\textwidth]{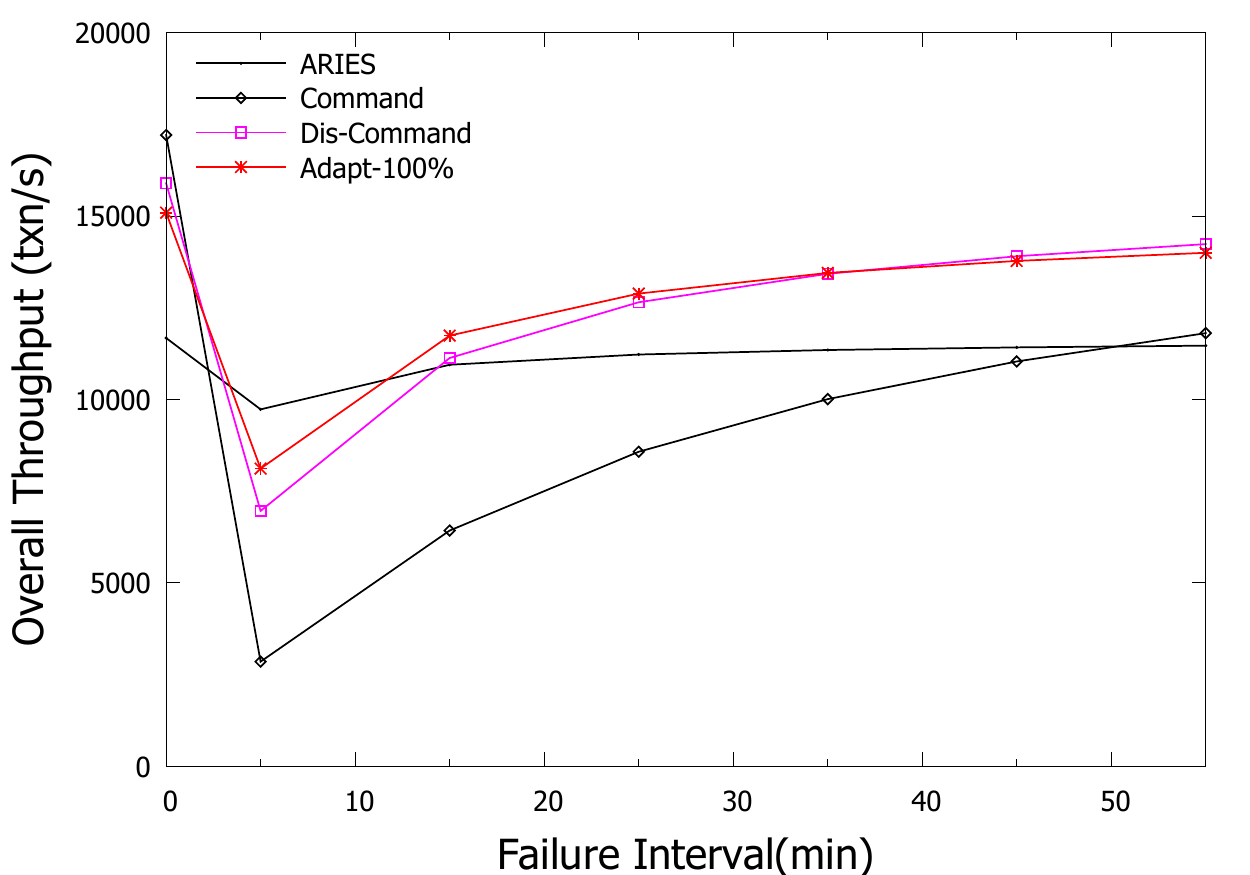}
\caption{Overall throughput with 5\% distributed transactions}
\label{fig:overall_5_response}
\end{subfigure}
\hspace{-3mm}
\hfill
\begin{subfigure}[b]{0.32\linewidth}
\includegraphics[width=1\textwidth,height=0.9\textwidth]{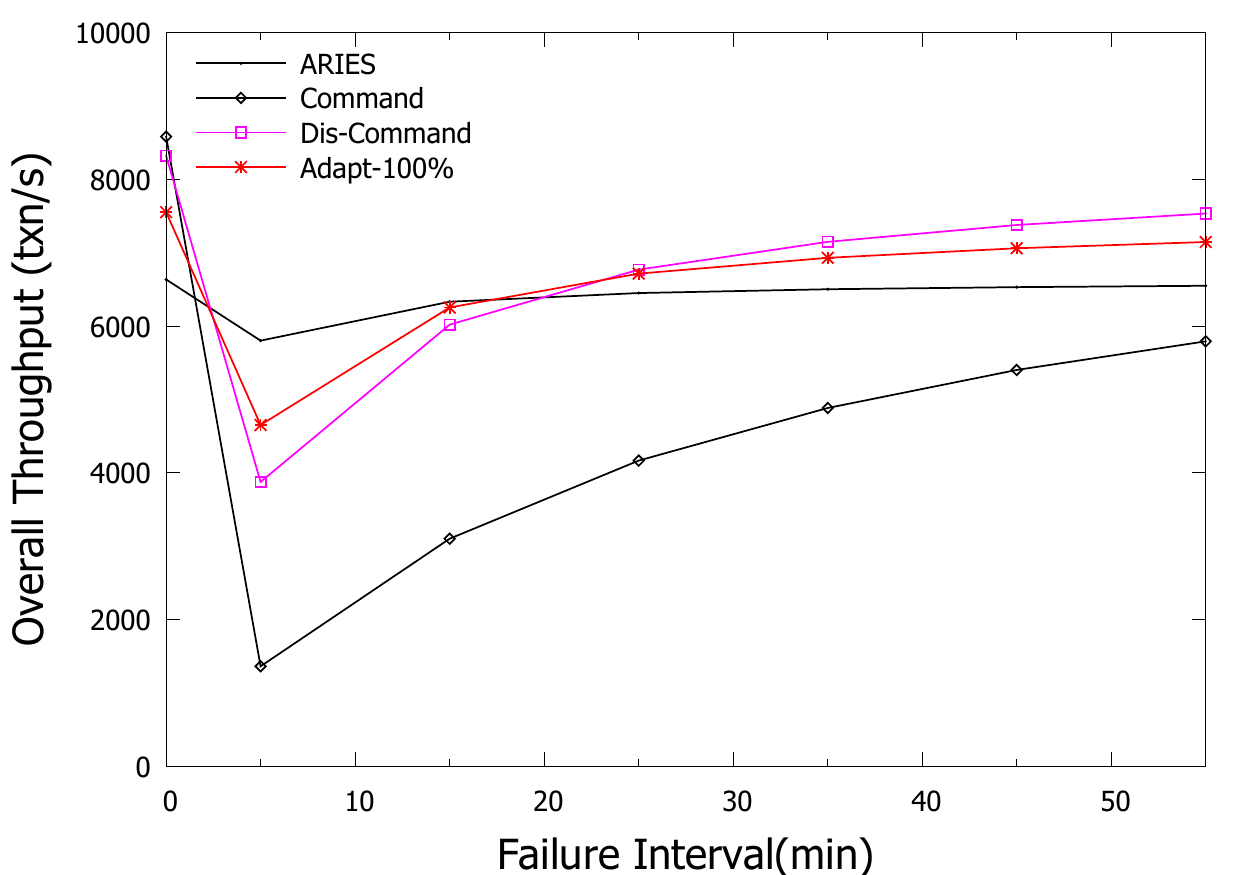}
\caption{Overall throughput with 10\% distributed transactions}
\label{fig:overall_10_response}
\end{subfigure}
\hspace{-2mm}
\hfill
\begin{subfigure}[b]{0.32\linewidth}
\includegraphics[width=1\textwidth,height=0.9\textwidth]{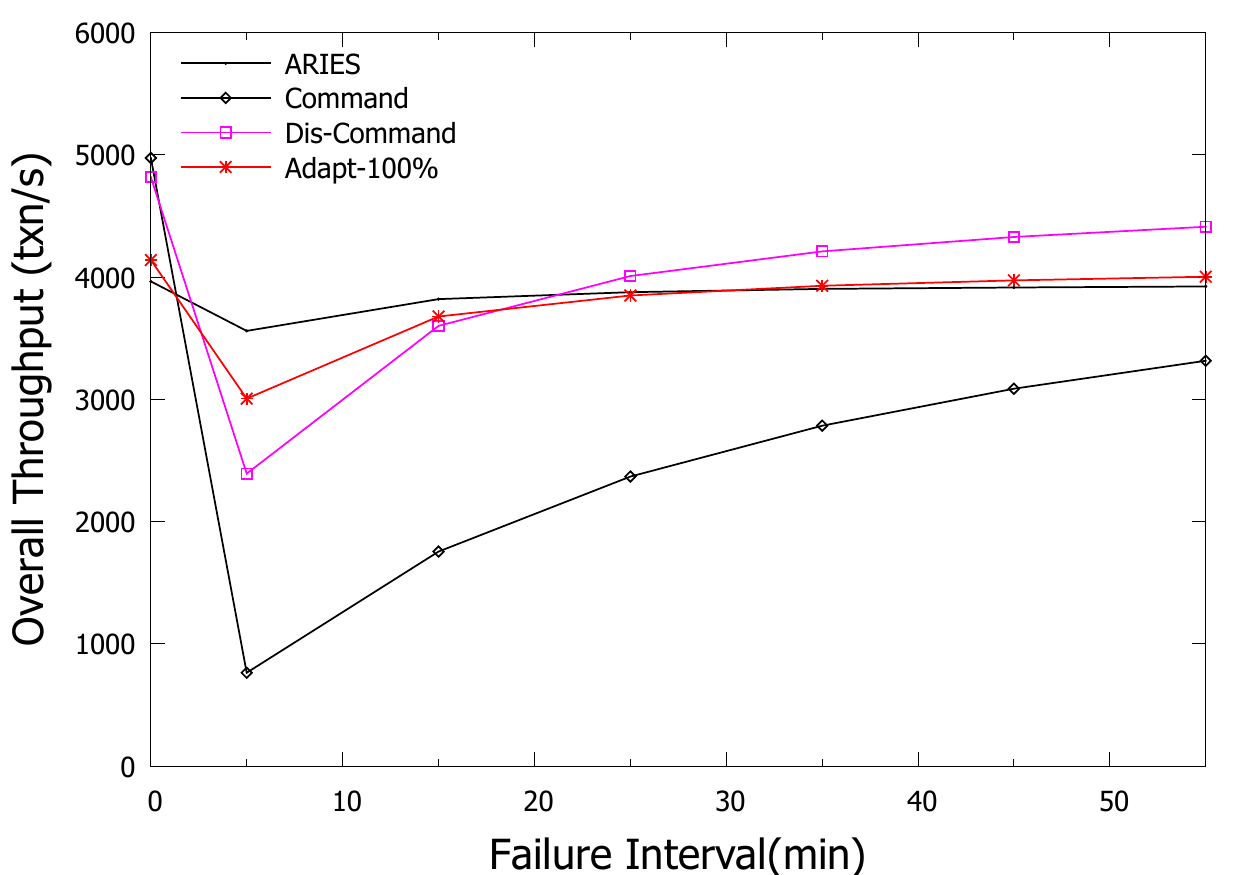}
\caption{Overall throughput with 20\% distributed transactions}
\label{fig:overall_20_response}
\end{subfigure}
\vspace{-7pt}
\caption{Overall performance evaluation}
\label{fig:overall}
\end{minipage}
\hspace{-3mm}
\hfill
\begin{minipage}[b]{0.24\linewidth}
\includegraphics[width=1\textwidth,height=0.9\textwidth]{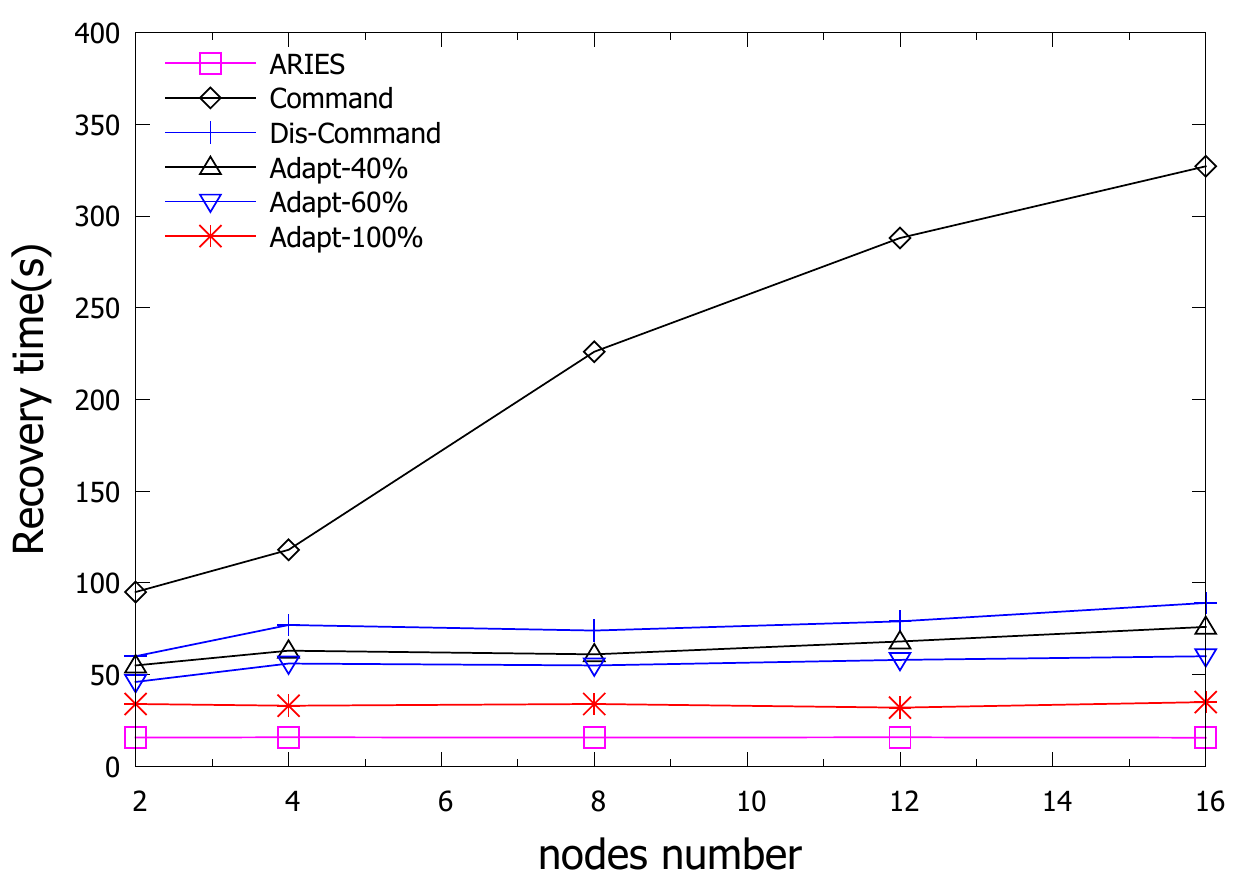}
\caption{Recovery time V.S. node number with distributed transactions}\label{fig:scale_1}
\vspace{-3pt}
\end{minipage}
\vspace {-10pt}
\end{figure*}

\subsubsection{Scalability}
{\color{black}
In this experiment, we evaluate the scalability of our proposed approaches.
In Figure \ref{fig:scale_1},
each site processes at least 30,000 transactions before we randomly terminate one site (other
sites will detect it as a failed site).
The percentage of distributed transactions is 10\% which are uniformly distributed among all sites.
We observe that command logging is not scalable,
as the recovery time is linear to the number of sites,
because all sites need to reprocess their lost transactions.
The recovery cost of distributed command logging increases
by about 50\% when we increase the number of sites from 2 to 16.
The other logging approaches show a scalable recovery performance.
Adaptive logging selectively creates \aries\ logs to break dependency relations among compute nodes. The number of transactions which are required to be reprocessed
is greatly reduced during recovery.}

\vspace{-8pt}
\section{Related Work}
\aries\cite{mohan1992aries} logging is widely adopted for recovery
in traditional disk-based database systems.
As a fine-grained logging strategy,
\aries\ logging needs to construct log records for each modified tuple.
Similar techniques are applied to in-memory database systems\cite{eich1986main,jagadish1994dali,186198,jagadish1993recovering}.

In \cite{DBLP:conf/icde/MalviyaWMS14},
the authors argue that for in-memory systems,
since the whole database is maintained in memory,
the overhead of \aries\ logging cannot be ignored.
They proposed a different kind of coarse-grained logging strategy called command logging.
It only records transaction's name and parameters instead of concrete tuple modification information.

\aries\ log records contain the tuples' old values and new values. Dewitt et al\cite{dewitt1984implementation}
try to reduce the log size by only writing the new values to log files.
However, log records without old values cannot support undo operation.
So it needs large enough stable memory which can hold the complete log records for active transactions.
They also try to write log records in batch to optimize the disk I/O performance.
Similar techniques such as group commit\cite{hagmann1987reimplementing} are also explored in modern database systems.

Systems\cite{PostgreSQL} with asynchronous commit strategy allow transactions
to complete without waiting log writing requests to finish.
This strategy can reduce the overhead of log flush to an extent.
But it sacrifice database's durability, since the states of
the committed transactions can be lost when failures happen.

Lomet et al\cite{lomet2011implementing} propose a logical logging strategy.
The recovery phase of \aries\ logging combines physiological redo and logical undo.
This work extends \aries\ to work in a logical setting.
This idea is used to make \aries\ logging more suitable for in-memory database system.
Systems like \cite{DBLP:journals/pvldb/KallmanKNPRZJMSZHA08,kemper2011hyper} adopt
this logical strategy.

If non-volatile RAM is available, database systems\cite{li1993post} can use it to do
some optimizations at the runtime to reduce the log size by using shadow pages for updates.
With non-volatile RAM, recovery algorithms proposed by Lehman and Garey\cite{lehman1987recovery}
can then be applied.

There are many research efforts\cite{186198,cao2011fast,dewitt1984implementation,pu1986fly,rosenkrantz1978dynamic,salem1989checkpointing} devoted to efficient checkpointing for in-memory database systems.
Recent works such as\cite{186198,cao2011fast} focus
on fast checkpointing to support efficient recovery.
Usually checkpointing techniques need to combine with logging techniques
and complement with each other to realize reliable recovery process.
Salem et al\cite{salem1990system} survey many checkpointing techniques,
which cover both inconsistent and consistent checkpointing with different
logging strategies.

Johnson et al\cite{johnson2010aether} identify logging-related impediments to
database system scalability. The overhead of log related locking/latching contention
decreases the performance of the database systems, since transactions need to
hold locks while waiting for the log to write.
Works such as\cite{johnson2010aether,pandis2010data,pandis2011plp}
try to make logging more efficient by reducing the effects of locking contention.

RAM-Cloud\cite{ongaro2011fast},
a key-value storage for large-scale applications,
replicates node's memory across nearby disks.
It is able to support very fast recovery
by careful reconstructing the failed data
from many other healthy machines.

\section{Conclusion}
{\color{black}
In the context of in-memory databases, Compared to
command logging\cite{DBLP:conf/icde/MalviyaWMS14} shows a much better performance for transaction processing
compared to the traditional write-ahead logging (\aries\ logging\cite{mohan1992aries}).
However, the trade-off is that command logging can significantly increase recovery times in the case of a failure.
The reason is that command logging redoes all transactions in the log
since the last checkpoint in a serial order. 
To address this problem,
we first extend command logging to distributed systems to enable
all the nodes to perform their recovery in parallel. 
We identify the transactions
involved in the failed node by analyzing the dependency relations and only redo those involved transactions to reduce
the recovery overhead. 
We find that the recovery bottleneck of command logging
is the synchronization process to resolve data dependency. 
Consequentially, we design a novel adaptive
logging approach to achieve an optimized trade-off
between the performance of transaction processing and recovery.
Our experiments on H-Store show
that adaptive logging can achieve a 10x boost for
recovery and its transaction throughput is comparable to command logging.
}


\bibliographystyle{abbrvnat}
\bibliography{adaptivelogging}

\end{document}